\begin{document}
\title{Adding a Tail in Classes of Perfect Graphs\thanks{
Research at the University of Ioannina supported by the Hellenic Foundation for Research and Innovation (H.F.R.I.) under the ``First Call for H.F.R.I. Research Projects to support Faculty members and Researchers and the procurement of high-cost research equipment grant'', Project FANTA (eFficient Algorithms for NeTwork Analysis), number HFRI-FM17-431.
}
}
%
%
\author{Anna Mpanti\inst{1,2}
\and
Stavros D. Nikolopoulos\inst{1,3}
\and
Leonidas Palios\inst{1,4}
}
\authorrunning{A. Mpanti et al.}
%
\institute{Dept. of Computer Science and Engineering, University of Ioannina, Greece
\and
ampanti@cs.uoi.gr
\and
0000-0001-6684-8459; \ stavros@cs.uoi.gr
\and
0000-0001-8630-3835; \ palios@cs.uoi.gr
}
\maketitle              
\begin{abstract}
Consider a graph~$G$ which belongs to a graph class~${\cal C}$. We are interested in connecting a node~$w \not\in V(G)$ to $G$ by a single edge~$u w$ where $u \in V(G)$; we call such an edge a \emph{tail}. As the graph resulting from $G$ after the addition of the tail, denoted $G+uw$, need not belong to the class~${\cal C}$, we want to compute a minimum ${\cal C}$-completion of $G+w$, i.e., the minimum number of non-edges (excluding the tail~$u w$) to be added to $G+uw$ so that the resulting graph belongs to ${\cal C}$.

In this paper, we study
this problem
for the classes of split, quasi-thre\-shold, threshold, and $P_4$-sparse graphs and we present linear-time algorithms by exploiting the structure of split graphs and the tree representation of quasi-threshold, threshold, and $P_4$-sparse graphs. 



\keywords{edge addition  \and completion \and split graph \and quasi-threshold graph \and threshold graph \and $P_4$-sparse graph}
\end{abstract}

\section{Introduction}

Given a graph~$G$, an edge connecting a vertex~$w \not \in V(G)$ to a vertex~$u$ of $G$ is a \emph{tail} added to $G$; let us denote the resulting graph as $G + uw$. If $G$ belongs to a class~${\cal C}$ of graphs, this may not hold for the graph~$G + uw$. Hence, we are interested in computing a minimum ${\cal C}$-completion of $G + uw$, i.e., the minimum number of non-edges (excluding the tail~$u w$) to be added to $G + uw$ so that the resulting graph belongs to ${\cal C}$; such non-edges are called \emph{fill edges}. The above problem is an instance of the more general (${\cal C},+k$)-MinEdgeAddition problem \cite{NP05} in which we add $k$ given non-edges in a graph belonging to a class~${\cal C}$ and we want to compute a minimum ${\cal C}$-completion of the resulting graph.

Computing a minimum completion of an arbitrary graph into a specific graph class is an important and well studied problem with applications in areas involving graph modeling with missing edges due to lacking data, e.g., molecular biology and numerical algebra \cite{GGKS95,NSS01,R72}. Unfortunately, minimum completions into many interesting graph classes, such as split graphs, chordal graphs and cographs, are NP-hard to compute \cite{BBD06,EC88,KF79,Y81}. This led the researchers towards the computation of minimal completions, the solution of problems with restricted input, and approximation or parameterized algorithms.

A related field is that of the dynamic recognition (or on-line maintenance) problem on graphs: a series of requests for the addition or the deletion of an edge or a vertex (potentially incident on a number of edges) are submitted and each is executed only if the resulting graph remains in the same class of graphs. Several authors have studied this problem for different classes of graphs and have given algorithms supporting some or all the above operations; we mention the edges-only fully dynamic algorithm of Ibarra \cite{I08} for chordal and split graphs, and the fully dynamic algorithms of Hell et al{.} \cite{HSS02} for proper interval graphs, of Shamir and Sharan \cite{SS04} for cographs, of Heggernes and Mancini for split graphs \cite{HM09}, and of Nikolopoulos et al{.} for $P_4$-sparse graphs \cite{NPP12}.

In this paper, we exploit the structure of split graphs and the tree representation of quasi-threshold, threshold, and $P_4$-sparse graphs in order to present algorithms for computing a minimum completion of a given graph~$G$ in each of these classes to which we have added a tail. Given the ($K,S$)-partition of a given split graph or the tree representation of a given quasi-threshold, threshold, or $P_4$-sparse graph, our algorithms run in optimal $O(n)$ time where $n$ is the number of vertices of $G$. These algorithms are a first step towards the solution of the (${\cal C},+1$)-MinEdgeAddition problem \cite{NP05} for each of these four classes~${\cal C}$ of graphs.

\section{Theoretical Framework}
We consider finite undirected graphs with no loops or multiple edges. For a graph~$G$, we denote by $V(G)$ and $E(G)$ the vertex set and edge set of $G$, respectively.
Let $S$ be a subset of the vertex set~$V(G)$ of a graph~$G$. Then, the subgraph of $G$ induced by $S$ is denoted by $G[S]$.
The {\it neighborhood\/}~$N_G(x)$ of a vertex~$x$ of the graph~$G$ is the set of all the vertices of $G$ which are adjacent to $x$. The {\it closed neighborhood\/} of $x$ is defined as $N_G[x] := N_G(x) \cup \{x\}$.
The {\it degree\/} of a vertex $x$ in $G$, denoted $deg(x)$, is
the number of vertices adjacent to $x$ in $G$; thus, $deg(x) = |N_G(x)|$.
A vertex of a graph is \emph{universal} if it is adjacent to all other vertices of the graph. We extend this notion to a subset of the vertices of a graph~$G$ and we say that a vertex is \emph{universal in a set}~$S \subseteq V(G)$, if it is universal in the induced subgraph~$G[S]$.
Finally, $C_k$ ($P_k$ resp.) denotes the chordless cycle (chordless path resp.) on $k$ vertices; in each $P_4$, the unique edge incident on its first or last vertex is called a \emph{wing}.

\section {Split Graphs}
The split graphs are of wide theoretical interest and have been the focus of many research papers. An undirected graph $G$ is \emph{split} if its vertex set~$V(G)$ admits a partition into a clique~$K$ and an independent set~$S$ \cite{Go2}; the partition into $K, S$ can be computed in time proportional to the size of the graph.
It also holds that a graph is split if and only if it contains no induced $C_4$, $C_5$, or $2K_2$.


\begin{lemma} \label{lemma:forms}
Let $G$ be a split graph with vertex partition into a clique~$K$ and an independent set~$S$, $u$ a vertex of $G$, $uw$ a tail, and $K_s = \{ x \in K \,|\, N_G(x) \cap S \ne \emptyset \}$. 
Then, there exists a split-completion for the graph~$G+uw$ in which the number of fill edges needed
is
$0$ if $u \in K$ and
$|K_s|- deg_G(u)$ if $u \in S$.
\end{lemma}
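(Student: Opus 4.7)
My plan is to split the argument according to whether $u \in K$ or $u \in S$, and in each case to construct an explicit split partition of $V(G+uw)$ that becomes a split graph after inserting the claimed number of fill edges.

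When $u \in K$, I would argue that $G + uw$ is already split: the pair $(K,\, S \cup \{w\})$ is a valid split partition, because $K$ remains a clique and $w$'s only edge is $uw$ with $u \in K$, so $S \cup \{w\}$ is still independent. Hence no fill edges are required.

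When $u \in S$, my first step is the preliminary observation that $N_G(u) \subseteq K_s$: every neighbor of $u$ lies in $K$ (since $S$ is independent), and each such neighbor $v$ witnesses $u \in N_G(v) \cap S$, so $v \in K_s$; in particular, $deg_G(u) \le |K_s|$. I would then take as candidate partition $K' = K_s \cup \{u\}$ and $S' = (K \setminus K_s) \cup (S \setminus \{u\}) \cup \{w\}$ of $V(G+uw)$, and insert exactly the $|K_s| - deg_G(u)$ fill edges joining $u$ to the vertices of $K_s \setminus N_G(u)$. Together with the existing clique on $K_s$, these turn $K'$ into a clique.

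The main obstacle will be to verify that $S'$ is independent. Three of the four candidate sources of edges vanish immediately: $S \setminus \{u\}$ is independent as a subset of $S$, $w$'s sole neighbor $u$ lies in $K'$, and no vertex of $K \setminus K_s$ has a neighbor in $S$ by the definition of $K_s$. The one delicate point is that $K \setminus K_s$ is a subset of the clique $K$ and could therefore carry internal edges. I would dispose of this by a preliminary normalization of the given partition: whenever $|K \setminus K_s| \ge 2$, moving a single vertex $v$ of $K \setminus K_s$ from $K$ to $S$ yields a valid split partition in which every remaining $K$-vertex has an $S$-neighbor (namely $v$), so the new $K \setminus K_s$ becomes empty. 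Applying the construction above to this normalized partition then produces the desired split completion with exactly the claimed number of fill edges.
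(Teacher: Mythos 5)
Your construction coincides with the paper's in both cases (clique $K_s\cup\{u\}$, fill edges joining $u$ to $K_s\setminus N_G(u)$), but the proposal proves only half of what the lemma is meant to establish. The phrase ``the number of fill edges needed'' is read by the paper --- and by the text immediately after the lemma, which speaks of computing ``the minimum number of fill edges'' --- as the optimum, and the paper's proof devotes its second half to the lower bound: assuming a completion with fewer than $|K_s|-deg_G(u)$ fill edges, it finds a vertex $a\in K_s\setminus N_G(u)$ not incident on any fill edge and exhibits a $2K_2$ (either $ab$ together with $uw$ for a second such untouched vertex $b$, or $az$ together with $uw$ where $z$ is a neighbor of $a$ in $S$). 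Nothing in your argument rules out a cheaper completion, so the optimality claim is simply missing.

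Second, your normalization step, although it correctly flags a subtlety that the paper glosses over (the paper merely asserts $|K\setminus K_s|\le 1$ after its proof), is not consistent with the formula you are trying to prove. When $|K\setminus K_s|\ge 2$ and you move $v\in K\setminus K_s$ into $S$, the set $K_s$ recomputed for the new partition becomes all of $K\setminus\{v\}$, so your construction inserts $|K|-1-deg_G(u)$ fill edges, strictly more than the quantity $|K_s|-deg_G(u)$ computed from the original partition; saying this yields ``exactly the claimed number'' conflates the two partitions. In fact no completion achieving the original count exists in that regime: take $G$ to be a triangle $\{a,b,c\}$ plus an isolated vertex $u$, with $K=\{a,b,c\}$ and $S=\{u\}$; the formula gives $0$, yet $ab$ and $uw$ form a $2K_2$, so at least one (in fact two) fill edges are required. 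Hence either the lemma must be read with the implicit hypothesis $|K\setminus K_s|\le 1$ --- under which your construction agrees with the paper's and no normalization is needed --- or the formula must be restated for the normalized partition; as written, your argument does not establish the stated count, and in any case the minimality argument still has to be supplied.
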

\begin{proof}
If $u \in K$, no fill edge (in addition to $uw$) is needed, which is optimal, since $G+uw$ is a split graph with clique~$K$ and independent set~$S \cup \{w\}$.

Now consider that $u \in S$. A split completion of $G+uw$ can be obtained by connecting $u$ to all its non-neighbors in $K_s$; the resulting graph is split with clique~$K_s \cup \{u\}$ and independent set~$S \cup (K - K_s) \cup \{w\}$.
To prove its optimality, suppose for contradiction that there existed a split completion of $G+uw$ that uses fewer than $|K_s| - deg_G(u)$ fill edges. Then, there would exist a vertex $a \in K_s \setminus N_G(u)$ which is not incident on any fill edge. If there existed one more vertex $b \in K_s \setminus (N_G(u) \cup \{a\})$ not incident on any fill edge as well, then the edges $ab$ and $uw$ would form a $2 K_2$, a contradiction. Then, all the fill edges would be incident on the vertices in $K_s \setminus (N_G(u) \cup \{a\})$. But then, if $z$ is a neighbor of $a$ in $S$, the edges $a z$ and $uw$ would form a $2 K_2$, a contradiction.
\end{proof}
Since the vertices in the clique~$K$ are all pairwise adjacent, we note that $|K \setminus K_s| \le 1$.
The lemma directly implies that given the set~$K_s$, the minimum number of fill edges can be computed in $O(|V(G)|)$ time otherwise the time complexity is $O(|V(G)| + |E(G)|)$.

\begin{figure}[t!]
    \centering
    \includegraphics[width=0.6\textwidth]{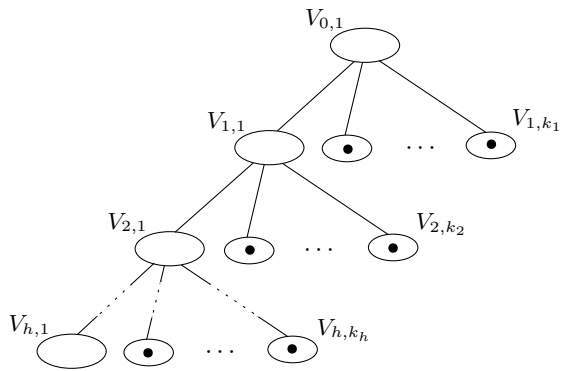}
    \caption{The structure of the tree representation of a threshold graph \cite{Ni}
    } \label{fig:thres}
\end{figure}

\section {Threshold and Quasi-threshold Graphs}

\noindent
\textbf{Threshold Graphs}. A well-known subclass of perfect graphs called threshold graphs are those whose independent vertex set subsets can be distinguished by using a single linear inequality. 
A graph~$G$ is \emph{threshold} if there exists a threshold assignment $[\alpha, t]$ consisting of a labeling $\alpha$ of the vertices by non-negative integers and an integer threshold~$t$ such that: a set~$S \subseteq V(G)$ is independent if and only if $\alpha(v_1) + \alpha(v_2) + \cdots + \alpha(v_p) \leq t$ where $v_i\in S, 1 \leq i \leq p$. Chvátal and Hammer \cite{ChHa} first proposed threshold graphs in 1973 and have proved that the threshold graphs are precisely the graphs that contain no induced $C_4$, $P_4$, or $2K_2$. 

Nikolopoulos \cite{Ni} proved that every threshold graph admits a unique rooted tree representation as shown in Figure~\ref{fig:thres}: each tree node is associated with a vertex set $V_{i,j}$ (these sets partition the vertex set of the graph) with each $V_{i,1}$ inducing a clique and each of the remaining sets containing a single vertex (note that the tree nodes associated with these singleton sets have no descendants) and the vertices in the union of the sets associated with the nodes on a path of tree edges from a tree node to any one of its descendants induce a clique. Thus, the vertices in $V_{k,1}$ are adjacent to all the vertices in $\bigcup_{i>k} V_{i,j}$, the vertices in $\bigcup_i \bigcup_{j\ge 2} V_{i,j}$ form an independent set, and the vertices in $\bigcup_i V_{i,1}$ induce a clique.

Let $G$ be a threshold graph and consider adding a tail~$uw$ to $G$ where $u \in V(G)$. Then we can show the following lemma.

\begin{figure}[t!]
\centering
\includegraphics[width=0.55\textwidth]{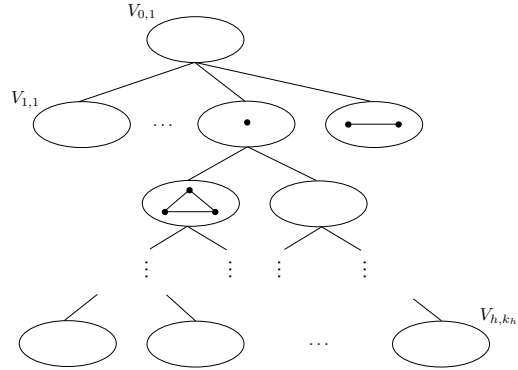}
\caption{The tree representation of a quasi-threshold graph.}
\label{fig:qt}
\end{figure}

\begin{lemma} \label{lemma:forms}
Let $G$ be a threshold graph and let its tree representation~$T_G$ consist of nodes associated with sets~$V_{i,j}$ where $0 \le i \le h$ and $1 \le j \le k_i$ (Figure~\ref{fig:thres}). Consider the addition of a tail~$u w$ where $u \in V(G)$. Then, there exists a minimum threshold completion of the graph~$G+uw$ using $f$ fill edges (excluding $uw$) where: 
\begin{enumerate}
\item [(i)] If $u \in V_{i,1}$, then $f \ =\  \min_{0\leq \ell \leq i} \left\{ \left( \sum_{s=\ell+1}^{i} \sum_{t=2}^{k_{s}} |V_{s,t}| \right) + \sum_{s=0}^{\ell} |V_{s,1}| \right\}$;

\item [(ii)] If $u  \in V_{i,j}$ where $2 \leq j \leq k_i$, then $f = \min\{f_1, f_2\}$ where\\
$f_1 \ =\  \sum_{r=i}^{h} |V_{r,1}| + \min_{i\leq \ell \leq h} \left\{ \left( \sum_{s=\ell+1}^{h} \sum_{t=2}^{k_{s}} |V_{s,t}| \right) + \sum_{s=0}^{\ell} |V_{s,1}| \right\}$ \ and\\
$f_2 \ =\  \sum_{r=i}^{h} |V_{r,1}| + \min_{0\leq \ell \leq i-1} \left\{ \left( \sum_{s=\ell+1}^{h} \sum_{t=2}^{k_{s}} |V_{s,t}| \right) + \sum_{s=0}^{\ell} |V_{s,1}| \right\} - 1$.


\end{enumerate}
\end{lemma}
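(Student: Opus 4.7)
The plan is to establish matching upper and lower bounds on $f$ by parameterizing all candidate threshold completions of $G+uw$ according to the placement of the new vertex $w$ in the tree representation of the completed graph.

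For case (i), where $u\in V_{i,1}$, for each $\ell\in[0,i]$ I would explicitly construct a threshold completion whose tree representation $T'$ is obtained by relocating $u$ from level $i$ to level $\ell$ (so $u$ joins $V'_{\ell,1}$) and inserting $w$ as a new leaf at level $\ell+1$. The fill edges then split cleanly into two groups: the edges from the relocated $u$ to every leaf at levels $\ell+1,\ldots,i$, accounting for the first sum $\sum_{s=\ell+1}^{i}\sum_{t=2}^{k_s}|V_{s,t}|$, and the edges from $w$ to every spine vertex at levels $\le\ell$ other than $u$, accounting for the second sum $\sum_{s=0}^{\ell}|V_{s,1}|$ (with $V_{0,1}$ understood as a virtual root so that the existing tail edge $uw$ covers the $u$--$w$ adjacency). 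Verifying that the resulting graph is threshold amounts to checking that the described $T'$ induces exactly the right adjacencies. Taking the minimum over $\ell$ yields the upper bound. For the lower bound, I would take an arbitrary threshold completion $G'$, use the fact that $w$'s neighborhood must be nested with every other neighborhood, and read off from the tree representation of $G'$ a value $\ell^\star\in[0,i]$ for which the number of fill edges in $G'$ is at least the formula expression evaluated at $\ell^\star$.

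For case (ii), where $u\in V_{i,j}$ with $j\ge 2$, the analysis splits into $f_1$ and $f_2$ according to whether the level of $w$ in $T'$ is strictly greater than $i$ (handled by $f_1$) or at most $i$ (handled by $f_2$). In both subcases, the common summand $\sum_{r=i}^{h}|V_{r,1}|$ accounts for fill edges connecting $u$ to every spine vertex at levels $i,\ldots,h$: without these, there exists a spine vertex $x$ non-adjacent to $u$ together with a neighbor $y$ of $x$ such that $\{u,w,x,y\}$ induces either a $2K_2$ or a $P_4$, contradicting the threshold property. After these edges are added, $u$ behaves like a spine vertex and the remaining minimization mirrors case (i). The "$-1$" in $f_2$ corrects for a double-count: when $\ell<i$, the vertex $u$ (originally a leaf at level $i>\ell$) is promoted to a spine at level $\ell$ in $T'$, and hence no longer counts among the leaves at levels $>\ell$ to which the new spine $u$ must be adjacent.

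The main obstacle will be the lower bound, particularly the unavoidability of the summand $\sum_{r=i}^{h}|V_{r,1}|$ in case (ii): I would have to show that every threshold completion which omits some $u$-to-spine edge at level $\ge i$ must compensate with an equally expensive set of other fill edges, by a careful case analysis of the induced forbidden subgraphs $P_4$, $C_4$, and $2K_2$. A secondary challenge is ensuring that the parametrization by $\ell$ is exhaustive over all threshold completions; this follows from the neighborhood-nesting property of threshold graphs, which forces $w$ to occupy a well-defined level in $T'$ corresponding to a unique value of $\ell$ in the formula.
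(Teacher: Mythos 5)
The paper states this lemma \emph{without any proof} --- nothing follows the statement before the quasi-threshold paragraph begins --- so there is no argument of the authors' to measure yours against; the closest models are the proofs actually given for the split-graph lemma and for Lemmas~\ref{lemma:tail_thin} and~\ref{lemma:tail_thick}, and your two-part architecture (explicit completions indexed by $\ell$ for the upper bound, forbidden-subgraph analysis for the lower bound) is exactly the pattern those proofs follow. Your upper-bound half is essentially right and routine to check: for each $\ell$, joining $u$ to every leaf vertex at levels $\ell+1,\dots,i$ and $w$ to every spine vertex at levels $0,\dots,\ell$ yields a graph with nested neighborhoods, hence a threshold graph, at the stated cost. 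One boundary point must be settled honestly rather than by appeal to a ``virtual root'': when $\ell=i$ the term $\sum_{s=0}^{i}|V_{s,1}|$ counts the set $V_{i,1}\ni u$, yet $uw$ is the tail and not a fill edge (take $G=P_3$ with $u$ its center, so that $i=0$: then $G+uw=K_{1,3}$ is already threshold and $f=0$, while the formula evaluates to $|V_{0,1}|\ge 1$). Either the convention of Figure~\ref{fig:thres} for $V_{0,1}$ absorbs this discrepancy or the statement needs a $-1$; your proof has to say which.

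The genuine gap is that both lower bounds are announced, not carried out. In case~(i) you must show that an \emph{arbitrary} threshold completion $G'$ pays at least the bracketed expression for some $\ell^\star$; the missing step is the ``all-or-nothing'' consequence of killing the obstructions: if a leaf $y$ at level $s\le i$ receives no fill edge to $u$ or $w$, then each induced $P_4$ of the form $w\,u\,x\,y$, with $x$ a spine vertex below the level of $y$, can only be destroyed by adding $wx$, so $w$ must be joined to the entire spine above $y$; extracting $\ell^\star$ from the deepest such $y$ and accounting for the leaves that \emph{do} receive fill edges is what turns a covering problem into a clean minimum over $\ell$, and it must be written out (including why a completion cannot profit by leaving several leaves at the same level untouched). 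In case~(ii) the unavoidability of the summand $\sum_{r=i}^{h}|V_{r,1}|$ is the crux, and the justification you sketch --- one forbidden subgraph per missing $u$--spine edge --- only shows that each such obstruction needs \emph{some} fill edge, not that $\sum_{r=i}^{h}|V_{r,1}|$ specifically placed edges are forced: for instance, joining $w$ to the spine above level $i$ already destroys every $P_4$ of the form $w\,u\,x'\,x$ with $x$ in the deep spine, and the remaining $2K_2$'s only force all but one deep-spine vertex to be hit. You need an exchange or counting argument, in the style of the case analyses in Lemmas~\ref{lemma:tail_thin} and~\ref{lemma:tail_thick}, showing that any completion omitting some of these edges pays at least as much elsewhere. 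As it stands the proposal is a plausible outline whose hardest steps remain to be done.
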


\bigskip
\noindent
\textbf{Quasi-threshold Graphs}. A graph $G$ is called \emph{quasi-threshold}, or QT-graph for short, if $G$ contains no induced
$C_4$ or $P_4$ \cite{Go,MaWaWu,Wo1,Wo2}. The class of quasi-threshold-graphs is a subclass of the class of cographs and properly contains the class of threshold graphs \cite{CoLeBu,CoPeSt,Go2,Ve}. Nikolopoulos and Papadopoulos \cite{NiPa} have shown, among other properties, a unique rooted tree representation of QT-graphs which is a generalization of the tree representation of threshold graphs (Figure~\ref{fig:qt}): the tree nodes are associated with disjoint vertex subsets each inducing a clique and the vertex sets associated with the tree nodes in a path from a tree node to any of its descendants induce a clique. It has been proven that a graph is QT-graph if and only if it admits such a tree representation \cite{KaNi,Ni2}. Then, by generalizing the approach for threshold graphs, we can show the following lemma.

\begin{lemma} \label{lemma:forms}
Let $G=(V,E)$ be a QT-graph, and let $T_G$ be its tree representation. Consider the addition of a tail~$u w$ incident on a node~$u$ of $G$ and suppose without loss of generality that $u \in V_{i,1}$ and that the vertex sets associated with the tree nodes on the path from the root of $T_G$ to the node associated with $V_{i,1}$ are in order $V_{0,1}, V_{1,1}, \ldots, V_{i,1}$. Then, there exists a minimum QT completion of the graph~$G+uw$, and the minimum number of fill edges needed (excluding the tail~$u w$) is
$\min_{0\leq \ell \leq i} \left\{ \left( \sum_{s=\ell+1}^{i} \sum_{t=2}^{k_{s}} |V_{s,t}| \right) + \sum_{s=0}^{\ell} |V_{s,1}| \right\}$

\end{lemma}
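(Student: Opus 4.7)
The plan is to adapt the proof strategy of the preceding threshold lemma to the richer QT setting by establishing matching upper and lower bounds. Write $f(\ell) := \sum_{s=\ell+1}^{i} \sum_{t=2}^{k_{s}} |V_{s,t}| + \sum_{s=0}^{\ell} |V_{s,1}|$ for $\ell\in\{0,1,\ldots,i\}$; the claim is that the minimum number of fill edges equals $\min_{0 \le \ell \le i} f(\ell)$.

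\textbf{Upper bound.} For each fixed $\ell\in\{0,\ldots,i\}$ I would construct an explicit QT-completion of $G+uw$ using exactly $f(\ell)$ fill edges, and then take the minimum. The parameter $\ell$ plays the role of a \emph{cut level}: the first $\ell+1$ spine nodes $V_{0,1},\ldots,V_{\ell,1}$ remain strictly above $w$ in the new tree $T_{G'}$, while $V_{\ell+1,1},\ldots,V_{i,1}$ (which include $u$) together with their side siblings are absorbed into a single clique-path below $V_{\ell,1}$. Concretely I would add fill edges from $w$ to every vertex of $V_{0,1}\cup\cdots\cup V_{\ell,1}$ and to every vertex of the side subtrees $V_{s,t}$ with $\ell+1\le s\le i$ and $t\ge 2$, giving exactly $f(\ell)$ fills (modulo the corner case $\ell=i$, where the tail $uw$ is already present and the count must be read accordingly). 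Verification that the completion is QT then amounts to exhibiting its tree representation, obtained from $T_G$ by re-hanging the affected subtrees as a clique-path below $V_{\ell,1}$ and inserting $w$ as a new tree node.

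\textbf{Lower bound.} Next I would show that no QT-completion uses fewer fills. Let $G'$ be any QT-completion of $G+uw$ with tree representation $T_{G'}$. Since $w$'s only original neighbor is $u$, the vertex $w$ lies on a root-to-leaf path of $T_{G'}$ passing through $u$; and since every original neighbor of $u$ in $G$ is still adjacent to $u$ in $G'$, all spine sets $V_{0,1},\ldots,V_{i,1}$ lie on this same path. Let $\ell^*$ be the number of spine sets that appear strictly above $w$'s tree node in $T_{G'}$. Two forced contributions then arise: (a) each vertex of $V_{0,1}\cup\cdots\cup V_{\ell^*,1}$ must be adjacent to $w$ in $G'$, contributing $\sum_{s=0}^{\ell^*}|V_{s,1}|$ fill edges; (b) for every side-subtree vertex $x\in V_{s,t}$ with $s>\ell^*$ and $t\ge 2$, at least one fill edge must be incident with $x$, because otherwise a chordless path $x-y-u-w$ (for a suitable spine vertex $y$ above $x$ in $T_G$) would persist as an induced $P_4$ of $G'$. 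An injective charging of such $x$'s to their fill edges yields the remaining $\sum_{s=\ell^*+1}^{i}\sum_{t=2}^{k_s}|V_{s,t}|$ contribution, and the total is at least $f(\ell^*)\ge\min_\ell f(\ell)$.

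\textbf{Main obstacle.} The crux of the argument is the injective charging in clause (b). Since the QT-tree of $G'$ may rearrange the side subtrees of $T_G$ arbitrarily, it is not a priori clear that the forced fills for different vertices $x$ are distinct; one must rule out that a single fill destroys many induced $P_4$s at once in a way that undercounts the bound. I would resolve this by invoking the $P_4$- and $C_4$-freeness of $G'$ to show that the fill edge associated with each $x$ is either itself incident with $x$, or with a spine vertex already counted in clause (a), and that the resulting charging is injective. Once this disjointness is established, the upper and lower bounds coincide and the formula claimed in the lemma follows.
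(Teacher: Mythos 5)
The paper states this lemma without proof (the same holds for the threshold lemma it generalizes), so there is no argument of the authors to compare yours against; I can only assess your sketch on its own terms. Your two-sided plan (an explicit completion per cut level plus a covering lower bound) is the right one, but both halves have concrete defects. \emph{Upper bound:} attaching the fill edges for the side sets $V_{s,t}$ ($\ell+1\le s\le i$, $t\ge 2$) to $w$ does not yield a QT-graph: if $y\in V_{s,t}$ is such a vertex and $x\ne u$ is any vertex of $V_{r,1}$ with $s\le r\le i$ (or any descendant of $u$'s node), then $y$--$w$--$u$--$x$ is an induced $P_4$ of your completed graph, since $x$ is adjacent to $u$ but to neither $w$ nor $y$. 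The construction that works attaches those edges to $u$, not $w$: make $u$ adjacent to every side subtree hanging strictly below the cut, extract $u$ from $V_{i,1}$ and re-insert it as a new spine node immediately below $V_{\ell,1}$ with $w$ as a private leaf child; then $w$ needs fill edges only to the spine cliques above the cut. \emph{Lower bound:} clause (a) is off by one --- if exactly $\ell^*$ spine cliques lie strictly above $w$'s node in $T_{G'}$, the forced adjacencies of $w$ account for $\sum_{s=0}^{\ell^*-1}|V_{s,1}|$ fill edges, not $\sum_{s=0}^{\ell^*}|V_{s,1}|$. Your ``main obstacle'' in clause (b), by contrast, is not really an obstacle: for every side vertex $x$ at level $s\le i$ and every spine vertex $y$ at level $r<s$, the path $x$--$y$--$u$--$w$ is an induced $P_4$ of $G+uw$ that can only be destroyed by adding one of the non-edges $xu$, $xw$, $yw$; all of these are incident on $u$ or $w$ and are distinguished by their other endpoint, so a minimum completion must contain a vertex cover of a staircase-shaped bipartite incidence structure, whose minimum is attained at a cut. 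No further injectivity argument is needed.

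The deeper issue is that the corrected argument proves $\min_{0\le\ell\le i}\bigl\{\sum_{s=0}^{\ell-1}|V_{s,1}|+\sum_{s=\ell+1}^{i}\sum_{t=2}^{k_s}|V_{s,t}|\bigr\}$, which is smaller than the stated bound by the positive term $|V_{\ell,1}|$. Indeed, the formula as written already fails for $G=K_n$ (a single tree node $V_{0,1}=V$, so $i=0$): it predicts $|V_{0,1}|=n$ fill edges, yet $K_n$ plus a pendant vertex contains no induced $P_4$ or $C_4$ and is already quasi-threshold, so zero fill edges suffice. Unless the indexing of Figure~\ref{fig:qt} is intended differently from its natural reading, the statement carries an off-by-one slip, and a proof of the formula exactly as written --- which is what your sketch targets, and which your clause (a) silently accommodates by over-counting --- cannot be completed. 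I suggest proving the shifted formula and flagging the discrepancy.
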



\section{$P_4$-sparse Graphs}

The $P_4$-sparse graphs are defined as the graphs for which every set of $5$ vertices induces at most one $P_4$ \cite{H85} (Figure~\ref{fig:forbidden} depicts the $7$ forbidden subgraphs for the class of $P_4$-sparse graphs). The $P_4$-sparse graphs are perfect and also perfectly orderable \cite{H85}, and properly contain many graph classes, such as, the cographs, the $P_4$-reducible graphs, etc. (see \cite{BLS99,JO92a,JO92b}). They have received considerable attention in recent years and find applications in applied mathematics and computer science (e.g., communications, transportation, clustering, scheduling, computational semantics) in problems that deal with graphs featuring “local density” properties.

\begin{figure}[t!]
\centering
\includegraphics[width=4.8in]{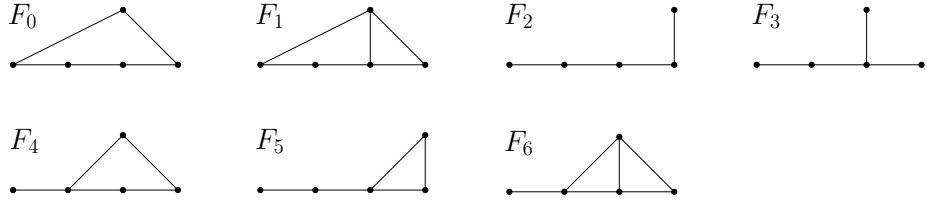}
\caption{The forbidden subgraphs of the class of $P_4$-sparse graphs 
\cite{JO92b}.}\label{fig:forbidden}
\end{figure}

For a $P_4$-sparse graph either the graph or its  complement is disconnected with the connected components inducing $P_4$-sparse graphs, or induces a spider. A graph~$H$ is called a \emph{spider} if its vertex set~$V(H)$ admits a partition into sets $S, K, R$ such that:
\begin{itemize}
\item
the set~$S$ is an independent set, the set~$K$ is a clique, and $|S| = |K| \ge 2$;
\item
every vertex in $R$ is adjacent to every vertex in $K$ and to no vertex in $S$;
\item
there exists a bijection $f: S \to K$ such that for each vertex~$s \in S$ either $N_G(s) \cap K = \{f(s)\}$ or $N_G(s) \cap K = K - \{f(s)\}$; in the former case, the spider is \emph{thin}, in the latter it is \emph{thick} (see Figure~\ref{fig:spiders}).
\end{itemize}
Note that for $|S| = |K| = 2$, the spider is simultaneously thin and thick. To avoid ambiguity, in the following, for thick spiders we assume that $|K| \ge 3$.


\begin{figure}[t]
\centering
\begin{minipage}{.7\textwidth}
  \centering
  \includegraphics[width=.99\linewidth]{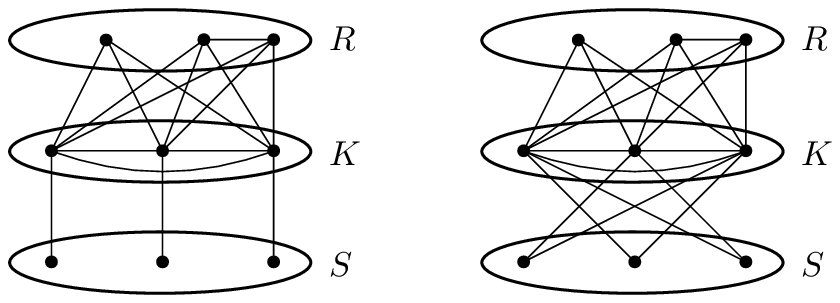}
  \captionof{figure}{(left)~A thin spider; (right)~a thick spider.}
  \label{fig:spiders}
\end{minipage}%
\qquad
\begin{minipage}{.21\textwidth}
  \centering
  \includegraphics[width=.99\linewidth]{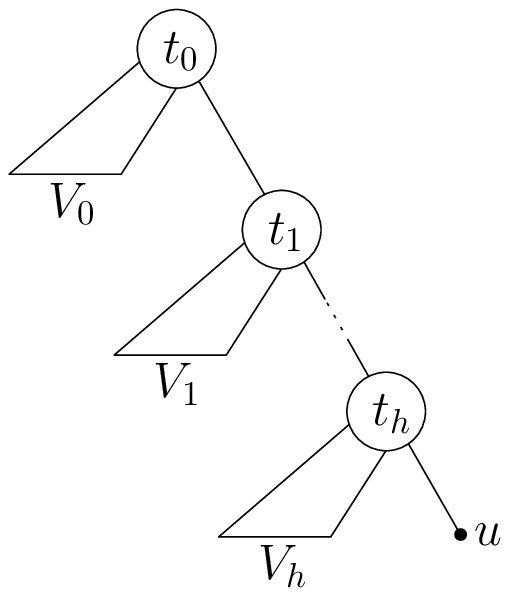}
  \captionof{figure}{}
  \label{fig:labelV}
\end{minipage}
\end{figure}

In \cite{JO92b}, Jamison and Olariu showed that each $P_4$-sparse graph~$G$ admits a unique tree representation, up to isomorphism, called the \emph{$P_4$-sparse tree} $T(G)$ of $G$, which is a rooted tree such that:
\begin{itemize}
\item[(i)]
each internal node of $T(G)$ has at least $2$ children provided that $|V(G)| \ge 2$;
\item[(ii)]
the internal nodes are labelled by one of $0$, $1$, or $2$ (\emph{$0$-, $1$-, $2$-nodes, resp.}) and the parent-node of each $0$- or $1$-node~$t$ has a different label than $t$;
\item[(iii)]
the leaves of the $P_4$-sparse tree are in a $1$-to-$1$ correspondence with the vertices of $G$; if the least common ancestor of the leaves corresponding to two vertices $v_i, v_j$ of $G$ is a $0$-node ($1$-node, resp.) then the vertices $v_i, v_j$ are non-adjacent (adjacent, resp.) in $G$, whereas the vertices corresponding to the leaves of a subtree rooted at a $2$-node induce a spider.
\end{itemize}

The structure of the $P_4$-sparse tree implies the following lemma.

\begin{lemma} \label{lemma:p4sparse_tree}
Let $G$ be a $P_4$-sparse graph and let $H = (S,K,R)$ be a \emph{thin} spider of $G$. Moreover, let $s \in S$ and $k \in K$ be vertices that are adjacent in the spider.
\par\smallskip\noindent
\textbf{\ P1}.
Every vertex of the spider is adjacent to all vertices in $N_G(s) \setminus \{k\}$.
\par\smallskip\noindent
\textbf{\ P2}.
Every vertex~$z \in K \setminus \{k\}$ is adjacent to all vertices in $N_G(k) \setminus \{s,z\}$.
\end{lemma}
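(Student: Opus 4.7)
The plan is to exploit the $P_4$-sparse tree representation $T(G)$ of $G$. First I would observe that the spider $H$ appears as the set of leaves of a subtree of $T(G)$ rooted at a $2$-node, say $t$, and hence $V(H)$ is a module of $G$: every vertex $x \in V(G) \setminus V(H)$ is adjacent either to all vertices of $V(H)$ or to none, since the least common ancestor in $T(G)$ of $x$ and any $y \in V(H)$ equals the least common ancestor of $x$ and $t$, which is independent of the choice of $y$. Property~(iii) of the tree representation then makes adjacency of $x$ to every $y \in V(H)$ depend only on the label of that common ancestor.

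For Part~\textbf{P1}, I would first pin down $N_G(s) \cap V(H)$ using the thin-spider structure. Since $H$ is a thin spider with bijection $f:S\to K$, the vertex $s$ has a unique neighbor in $K$, namely $f(s)=k$; moreover $S$ is independent and no vertex of $R$ is adjacent to any vertex of $S$. Hence $N_G(s)\cap V(H)=\{k\}$, which forces $N_G(s)\setminus\{k\}\subseteq V(G)\setminus V(H)$. Applying the module observation from the first paragraph, each vertex of $N_G(s)\setminus\{k\}$ is adjacent to every vertex of the spider, which is exactly \textbf{P1}.

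For Part~\textbf{P2}, I would split $N_G(k)\setminus\{s,z\}$ into its intersection with $V(H)$ and its intersection with $V(G)\setminus V(H)$. Inspecting the spider shows that the spider-neighbors of $k$ other than $s$ are precisely the vertices of $(K\setminus\{k\})\cup R$ (because $K$ is a clique, $s$ is the only $S$-neighbor of $k$ in a thin spider, and every vertex of $R$ is adjacent to every vertex of $K$); removing $z$ leaves $(K\setminus\{k,z\})\cup R$, and $z$ is adjacent to every vertex of this set by the same clique and spider properties. The non-spider part of $N_G(k)\setminus\{s,z\}$ is adjacent to $z$ by the module observation. Combining the two parts yields \textbf{P2}.

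I do not anticipate a genuine obstacle: the statement is essentially a bookkeeping consequence of the module behavior of a $2$-node's subtree, once $N_H(s)$ and $N_H(k)$ have been identified correctly. The only subtlety worth flagging is that \textbf{P1} relies critically on thinness (in a thick spider $s$ has many neighbors in $K$ and the claim would need to be restated), whereas \textbf{P2} uses only the generic spider structure together with thinness to single out $s$ as the unique $S$-neighbor of $k$.
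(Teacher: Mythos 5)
Your proof is correct, and it fills in exactly the argument the paper leaves implicit: the paper gives no proof of this lemma, merely asserting that it follows from the structure of the $P_4$-sparse tree, and your observation that $V(H)$ is a module (being the leaf set of the subtree rooted at a $2$-node) combined with the identification of $N_H(s)=\{k\}$ and $N_H(k)=(K\setminus\{k\})\cup R\cup\{s\}$ is the standard way to make that assertion precise. No discrepancy with the paper's approach.
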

\par\smallskip


Let $G$ be a given graph to which we want to add the tail~$uw$ with $u \in V(G)$. 
Let $t_0 t_1 \cdots t_h u$ be the path from the root~$t_0$ of the $P_4$-sparse tree~$T_G$ of $G$ to the leaf associated with $u$. Moreover, let $V_i$ ($0 \le i < h$) be the set of vertices associated with the leaves of the subtrees rooted at the children of $t_i$ except for $t_{i+1}$ and $V_h$ be the set of vertices associated with the leaves of the subtrees rooted at the children of $t_h$ except for the leaf associated with $u$ (see Figure~\ref{fig:labelV}). The sets~$V_0, V_1, \ldots, V_h$ form a partition of $V(G) \setminus \{u\}$.

We 
show that there always exists a minimum $P_4$-sparse completion of the graph $G+uw$ exhibiting one of a small number of different formations for $u,w$.


\begin{lemma} \label{lemma:forms}
Let $G$ be a $P_4$-sparse graph and $T_G$ be its $P_4$-sparse tree. Consider the addition of a tail~$u w$ incident on a node~$u$ of $G$. Then, there exists a minimum $P_4$-sparse completion~$G'$ of the graph~$G+uw$ such that for the $P_4$-sparse tree~$T_{G'}$ of $G'$, one of the following three cases holds:
\begin{enumerate}
\item  
The nodes $u, w$ in $T_{G'}$ have the same parent-node which is a $2$-node corresponding to a thin spider 
$(S,K,R)$ with $u \in K$ and $w \in S$.
\item  
The $P_4$-sparse tree~$T_{G'}$ results from $T_G$ by replacing the leaf for $u$ by the $3$-treenode Formation~$1$ shown in Figure~\ref{fig:formations}(left).
%
\item  
The $P_4$-sparse tree~$T_{G'}$ results from $T_G$ by removing the leaf for $u$ and replacing an $1$- or a $2$-node~$t$ in the path from the root of $T_G$ to the leaf for $u$ by the $5$-treenode Formation~$2$ in Figure~\ref{fig:formations}(right).
\end{enumerate}
\end{lemma}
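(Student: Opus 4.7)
The plan is to start from an arbitrary minimum $P_4$-sparse completion $G^*$ of $G+uw$ with $P_4$-sparse tree $T_{G^*}$, and to show, via local surgeries that never increase the fill count, that $T_{G^*}$ can be replaced by a tree exhibiting one of the three prescribed forms. The first step is to locate $w$ in $T_{G^*}$ by identifying its parent-node $p_w$ and the least common ancestor $a$ of the leaves corresponding to $u$ and $w$; since $uw$ is an edge of $G^*$, the node $a$ must be either a $1$-node or a $2$-node, which already constrains the discussion.

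The case analysis splits according to the relation between $p_w$ and $a$. If $p_w = a$ and $a$ is a $2$-node, then $u$ and $w$ belong to the spider rooted at $a$; I would argue via the minimality of $G^*$ that the spider can be taken to be thin with $w$ as the partner of $u$ in $S$, since any other placement (a thick spider, or $w \in K$) forces $w$ to have several neighbors, and each such neighbor costs a fill edge that a thin placement avoids. This yields Case~1. If $p_w = a$ and $a$ is a $1$-node, I would exhibit a local transformation that replaces the subtree responsible for $u, w$ by the $3$-treenode Formation~$1$ hanging $w$ off the leaf for $u$, preserving every adjacency required by $G+uw$ without adding extra fill edges; this produces Case~2. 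If $p_w \ne a$, an exchange argument pushes $w$ up along the path of descendants of $a$ towards $u$: at each step, fill edges from $w$ to the subtree rooted at $p_w$ can be consolidated or removed, eventually yielding either one of the previous two subcases or a configuration in which $w$ is attached through the $5$-treenode Formation~$2$ replacing an internal $1$- or $2$-node $t_i$ on the path $t_0 t_1 \cdots t_h$ from the root to $u$, giving Case~3.

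The main obstacle is the verification of Case~3, i.e., that inserting Formation~$2$ in place of an interior node $t_i$ produces a valid minimum completion. Three points need care: label alternation along the ancestors and descendants of $t_i$ must be preserved so that the resulting tree is a valid $P_4$-sparse tree; every $2$-node inside Formation~$2$ must truly correspond to a spider in the induced subgraph it represents; and the total fill-edge count must not exceed that of $G^*$. For the counting step I would invoke Lemma~\ref{lemma:p4sparse_tree}: properties \textbf{P1} and \textbf{P2} pin down the adjacency patterns forced around any thin spider containing $w$, which lets us match each fill edge of Formation~$2$ with an edge that any completion of $G+uw$ placing $w$ near $t_i$ must already incur. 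Once these three verifications are in hand, every minimum completion $G^*$ can be transformed into one of the three prescribed forms, proving the lemma.
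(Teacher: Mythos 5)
Your overall strategy (start from an optimal completion $G^*$ and perform cost-non-increasing surgeries on its tree until one of the three forms appears) is the same as the paper's, and your third subcase ($p_w \ne a$, move $w$ towards $u$ and land in Formation~2) matches the paper's treatment of the case where $u,w$ do not share a parent. But the other two subcases contain genuine gaps. When the least common ancestor $a$ is a $2$-node you claim the spider ``can be taken to be thin with $w$ as the partner of $u$ in $S$.'' A completion may only \emph{add} edges, so if $u$ already has several neighbours in the independent set or the clique of a thick spider (these may be edges of $G$, not fill edges), you cannot re-seat $u$ in the clique of a thin spider with $w$ as its unique $S$-partner; the paper instead shows that the thick-spider placement, and the thin-spider placement when $uw$ is a $K$-$K$ or $K$-$R$ edge, are \emph{dominated by Formation~2} (so they land in Case~3, not Case~1), and only the thin $S$-$K$ configuration is turned into Case~1 by an explicit $u$/$w$ exchange. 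Likewise, in your subcase ``$p_w=a$ and $a$ is a $1$-node'' you route everything to Formation~1; this is only justified when $u$ is itself a child of $a$, so that $u,w$ are true twins in $G^*$ and $w$ already carries at least $|N_G(u)|$ fill edges. When $u$ lies deeper below another child of $a$ the correct target is again Formation~2; moreover, Formation~1 applied to $T_G$ need not even yield a $P_4$-sparse graph (if $u$ belongs to the $S$ or $K$ of a spider of $G$, making $w$ a true twin of $u$ creates two $P_4$s on five vertices), so ``hanging $w$ off the leaf for $u$ without adding extra fill edges'' is not always available.

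A further omission concerns what Cases~2 and~3 actually assert: that $T_{G'}$ coincides with $T_G$ outside the inserted formation and that the formation replaces a node on the root-to-$u$ path of the \emph{original} tree. Your proposal stops at ``a configuration in which $w$ is attached through Formation~2'' without showing (a) that the completion induces no fill edges within $V(G)$ outside the formation (the paper deduces $G_R[Z]=G[Z]$ from optimality), and (b) that the set $Z$ absorbed by Formation~2 is exactly $V_k\cup\cdots\cup V_h$ for some $k$, which the paper obtains by proving that each $V_j$ lies wholly inside or wholly outside $Z$ and that no $V_j$ with $j>k$ can remain outside. Finally, Lemma~\ref{lemma:p4sparse_tree} (properties P1, P2) is not the right tool for the fill-edge count here: it concerns thin spiders and is used later for Lemma~\ref{lemma:p4_formation}. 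The count for Formation~2 in the present lemma is obtained directly by charging its fill edges against the edges that $w$ is forced to have towards the subtree of $T_{G^*}$ containing $u$ below the least common ancestor.
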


\begin{figure}[t!]
\centering
\includegraphics[width=2in]{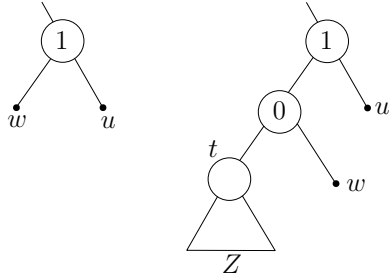}
\caption{(left)~Formation~$1$; (right)~Formation~$2$ where $t$ is a 1- or a 2-node.
Formation~$1$ is a special case of Formation~$2$ when $Z = \emptyset$.}
\label{fig:formations}
\end{figure}

\begin{proof}
Let $G_{OPT}$ be a minimum $P_4$-sparse completion of the graph~$G + uw$ and let $T_{OPT}$ be its $P_4$-sparse tree. We consider the following cases:

\smallskip
\noindent
A. \textit{The leaves associated with $u, w$ in $T_{OPT}$ do not have the same parent-node}:
\ Let $T_R$ be the $P_4$-sparse tree obtained from $T_{OPT}$ by using Formation~$2$ just above the least common ancestor~$t$ of $w$ and $u$ in $T_{OPT}$ (Figure~\ref{fig:to_form_2}); let $G_R$ be the $P_4$-sparse graph corresponding to the tree~$T_R$. Then, $G_R$ uses no more fill edges than $T_{OPT}$. To see this, let $t'$ be the child of $t$ that is an ancestor of the leaf for $u$ (note that $t'$ may coincide with the leaf for $u$). Since $u, w$ are adjacent in $G_{OPT}$, $t$ is a $1$- or a $2$-node. In either case, $w$ is adjacent to all vertices corresponding to the leaves of the subtree of $T_{OPT}$ rooted at $t'$ and all these edges, except for the tail~$uw$, are fill edges. If $t$ is a $1$-node, then $u$ is adjacent to all vertices in $X$ (Figure~\ref{fig:to_form_2}) and thus $G_R$ uses no more fill edges. If $t$ is a $2$-node then $u$ is adjacent to all the vertices in the clique~$K_X$ of the corresponding spider (which includes $w$). Moreover, because $w \in K_X$, $w$ is adjacent to all the vertices in $K_X \setminus \{w\}$ and to at least $1$ vertex in the independent set
for a total of $|K_X|$ fill edges; these fill edges can be used to connect $u$ to 
the vertices in the in\-de\-pen\-dent set of the spider and thus $G_R$ uses no more fill edges in this case too.

Now, in the $P_4$-sparse tree~$T_R$ in Figure~\ref{fig:to_form_2}(right), let $A = V(G) \setminus (Z \cup \{u\})$. 
Recall that in the $P_4$-sparse tree~$T_G$ of $G$, the path from the root~$t_0$ to $u$ is $t_0 t_1 \cdots t_h u$ and $V_i$ ($0 \le i \le h$) is the set of vertices associated with the leaves of the subtrees rooted at the children of $t_i$ except for $t_{i+1}$ (where $t_{h+1}$ is the leaf associated with $u$); see Figure~\ref{fig:labelV}.
%

We first observe that the induced subgraph~$G_R[Z]$ induced by the set of vertices~$Z$ corresponding to the leaves of the subtree of $T_R$ rooted at node~$t$ coincides with the induced subgraph~$G[Z]$ (otherwise $G_{OPT}$ would include fill edges that could be removed in contradiction to its optimality); then, let $t = t_k$.
It also holds that node~$t$ in $T_R$ is a $1$- or a $2$-node, since node~$t$ was a $1$- or a $2$-node in $T_{OPT}$,  as well.
Let $A = V(G) \setminus (Z \cup \{u\})$.
Note that there is no set~$V_j$ such that $x,y \in V_j$, $x$ is a neighbor of $u$ in $G$, $x \in V_j \cap A$ and $y \in V_j \cap Z$, otherwise we can move $x$ to $Z$ along with $y$; because $y$ is in $Z$, all adjacencies from $y$ to all the vertices in $V(G) \setminus (V_j \cup \{u\})$ in $G$ are maintained and this will also hold for $x$, and the fill edge~$x w$ will be removed, a contradiction.
Similarly, there is no set~$V_j$ such that $x,y \in V_j$, $y$ is a non-neighbor of $u$ in $G$ and $x \in V_j \cap A$ and $y \in V_j \cap Z$ otherwise we can move $y$ to $A$ along with $x$. This implies that for each $i =0,1,\ldots,h$, either $V_i \subseteq A$ or $V_i \subseteq Z$, and since $t = t_k$, $V_k \subseteq Z$.

Finally, there exists no $j > k$ such that $V_j \subseteq A$. Suppose that there existed such a $V_j$ and let $j$ be the largest such index.  Then, because $t = t_k$ is a $1$- or a $2$-node and $k < j$, there would exist vertex $z \in V_k$ which would be adjacent to all vertices in $V_j$. This implies that in $T_R$, the least common ancestor of $z$ and the vertices in $V_k$ would be a $1$-node. But then, if we moved $V_j$ to $Z$ then we would have fewer fill edges, a contradiction. Therefore, the tree~$T_R$ is as described in Case~$3$ of the statement of the lemma.

\begin{figure}[t!]
\centering
\includegraphics[width=3.2in]{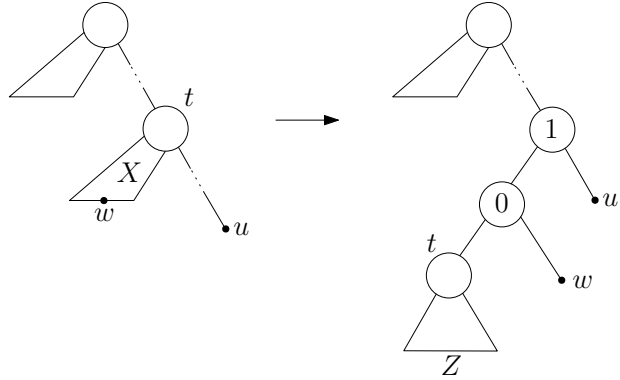}
\caption{(left)~The $P_4$-sparse tree~$T_{OPT}$ in which the leaves for $u,w$ do not have the same parent-node and have node~$t$ as their least common ancestor; (right)~The $P_4$-sparse tree~$T_R$ obtained by using Formation~$2$ just above node~$t$ which results in no more fill edges than those in $G_{OPT}$.}
\label{fig:to_form_2}
\end{figure}

\smallskip\noindent
B. \textit{The leaves associated with $u, w$ in $T_{OPT}$ have the same parent-node~$p$}: Then, since $u, w$ are adjacent, the parent-node~$p$ is either an $1$-node or a $2$-node.
\begin{itemize}
\item[(i)]
\textit{The parent-node~$p$ of $u, w$ in $T_{OPT}$ is an $1$-node}:
\ Then, the leaves associated with $u$ and $w$ are the only children of $p$ (Formation~$1$), otherwise we can use Formation~$2$ as shown in Figure~\ref{fig:temp1} which requires fewer fill edges. Then, $w$ will be adjacent to all neighbors of $u$ in $T_{OPT}$; this and the optimality of $G_{OPT}$ imply that $T_{OPT}$ results from $T_G$ by replacing the leaf for $u$ by Formation~$1$.

\item[(ii)]
\textit{The parent-node~$p$ of $u, w$ in $T_{OPT}$ is a $2$-node}:
\ Let $H = (S,K,R)$ be the corresponding spider. 
If $H$ is thick (thus $|K| \ge 3$), then 
no matter whether the tail~$u w$ is an $S$-$K$, $K$-$K$, or $R$-$K$ edge, 
the sum of degrees of $u, w$ in $H$ (excluding $uw$) is at least $|V(H)|-3 + |K| - 2$ (consider an $S$-$K$ edge). However, we would have added no more fill edges if we have made $u$ universal in $G[V(H) \setminus \{w\}]$ and then applied Formation~$2$ at the parent of the leaf for $u$ (then $Z = V(H) \setminus \{u, w\}$) using $V(H) - 2 \le V(H) + |K| -5$ fill edges.

In the same way, we show that we would have added no more fill edges if $H$ is a thin spider and the tail~$u w$ is a $K$-$K$ or $K$-$R$ edge. If $uw$ is an $S$-$K$ edge with $u \in S$ and $w \in K$, then we exchange $u$ and $w$ for the same total number of fill edges and get a thin spider with $u \in K$ and $w \in S$.
\end{itemize}
\end{proof}

\begin{figure}[t]
\centering
\includegraphics[width=2.5in]{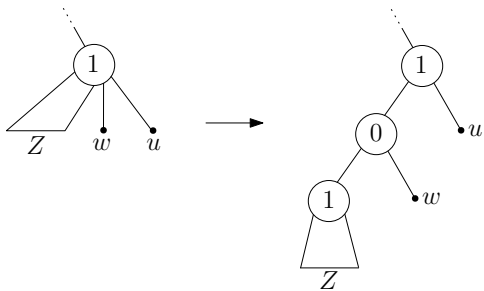}
\caption{A transformation that reduces the number of fill edges.}
\label{fig:temp1}
\end{figure}

\noindent

\subsection{Adding a Tail to a Spider}

In this section, we consider 
adding a tail~$u w$ to a spider~$H = (S_H, K_H, R_H)$ where $u \in V(H)$.
In the following two lemmas, we address the cases of a thin or a thick spider~$H$ respectively.

\begin{lemma} \label{lemma:tail_thin}
Consider the addition of a tail~$u w$ to a \emph{thin} spider~$H = (S_H, K_H$, $R_H)$ where $u$ is a vertex of $H$. Then, for the number~$f$ of fill edges (excluding the tail~$uw$) in a minimum $P_4$-sparse completion of the graph~$H+uw$, it holds:
\begin{enumerate}
\item
if $u \in S_H$, $f = |K_H|-1$ if $R_H = \emptyset$ and $f = |K_H|$ otherwise;
\item
if $u \in K_H$, $f = |K_H|-1$;
\item
if $u \in R_H$, 
then $f = \min\{\, |R_H \setminus N_H[u]|, \  |K_H| + f' \,\}$ where $f'$ is the number of fill edges (excluding $uw$) in a minimum $P_4$-sparse completion of the graph~$H[R_H]+uw$.
\end{enumerate}
\end{lemma}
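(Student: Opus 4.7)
The plan is to handle the three cases ($u\in S_H$, $u\in K_H$, $u\in R_H$) separately, matching the stated value of $f$ in each by exhibiting an explicit $P_4$-sparse completion (upper bound) and then using Lemma~\ref{lemma:forms} to rule out any cheaper completion (lower bound). Every construction I use produces a single thin spider on $V(H)\cup\{w\}$; verifying it reduces to checking that the new $K'$ is a clique, $S'$ is independent, the bijection $S'\to K'$ is thin, and every vertex of $R'$ is adjacent to every vertex of $K'$.

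Concretely: when $u\in S_H$ and $R_H=\emptyset$ I would ``swap'' $u$ with $f(u)$, setting $K'=(K_H\setminus\{f(u)\})\cup\{u\}$, $S'=(S_H\setminus\{u\})\cup\{w\}$, $R'=\{f(u)\}$, with $w$ paired to $u$; the only missing edges are the $|K_H|-1$ edges from $u$ to $K_H\setminus\{f(u)\}$. When $u\in S_H$ and $R_H\ne\emptyset$ I push $u$ into $R$ instead of into $K$: $S'=(S_H\setminus\{u\})\cup\{w\}$, $K'=K_H$, $R'=R_H\cup\{u\}$, with $w$ paired to $f(u)$; the fill edges are $wf(u)$ plus the $|K_H|-1$ edges joining $u$ to $K_H\setminus\{f(u)\}$, for a total of $|K_H|$. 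The case $u\in K_H$ is symmetric to the second: set $S'=(S_H\setminus\{f^{-1}(u)\})\cup\{w\}$, $K'=K_H$, $R'=R_H\cup\{f^{-1}(u)\}$, pair $w$ with $u$, and add the $|K_H|-1$ edges from $f^{-1}(u)$ to $K_H\setminus\{u\}$. For $u\in R_H$ I construct two completions, one per term of the minimum: either I promote $u$ into the clique of a larger spider by taking $K'=K_H\cup\{u\}$, $S'=S_H\cup\{w\}$, $R'=R_H\setminus\{u\}$ and pairing $w$ with $u$, which needs exactly $|R_H\setminus N_H[u]|$ edges from $u$ to its non-neighbors in $R_H$; or I keep $(S_H,K_H,R_H)$ unchanged and place $w$ in $R$, which needs $|K_H|$ edges joining $w$ to $K_H$ plus $f'$ edges internal to $R_H\cup\{w\}$ so that $H[R_H]+uw$ itself becomes $P_4$-sparse.

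For the lower bound, Lemma~\ref{lemma:forms} restricts the tree of any minimum completion $G'$ to one of three forms, which I would analyze in turn. Form~1 (thin spider with $u\in K$, $w\in S$) is precisely the shape of my constructions: a case split on whether $f(u)$ (respectively $f^{-1}(u)$) stays in $K'$ or moves to $R'$ reproduces the stated $f$ as the cheapest outcome over all thin-spider completions. Form~2 (Formation~1 applied at the leaf $u$ of $T_H$) is dominated: since the root of $T_H$ is a $2$-node and $H$ already contains the $P_4$'s of the form $s_i\text{-}f(s_i)\text{-}f(s_j)\text{-}s_j$, placing $w$ as a twin of $u$ produces, for each $s_j\ne u$ in $S_H$, a second $P_4$ through $w$ on the same five-vertex set, which can be eliminated only by at least as many fill edges as Form~1 uses. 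Form~3 (Formation~2 applied on the root-to-$u$ path, whose only $1$- or $2$-node is the root $2$-node of $T_H$) introduces the set $Z$; by the observation established inside the proof of Lemma~\ref{lemma:forms} that $Z$ must respect the partition $V_0,\ldots,V_h$ of $V(G)\setminus\{u\}$ and so cannot mix neighbors and non-neighbors of $u$ coming from the same $V_i$, the admissible choices of $Z$ collapse to a small number that reproduce the bounds already obtained. I expect the main obstacle to be the $u\in R_H$ sub-case, where the two terms $|R_H\setminus N_H[u]|$ and $|K_H|+f'$ arise from genuinely different Form~3 choices of $Z$, and one has to argue carefully that no intermediate choice undercuts their minimum.
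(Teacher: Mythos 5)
Your upper-bound construction for the case $u\in S_H$, $R_H\neq\emptyset$ is broken. You place $u$ in $R'$ and $w$ in $S'$, but the tail $uw$ is an edge that cannot be deleted, and the spider definition requires every vertex of $R'$ to be adjacent to \emph{no} vertex of $S'$; so $(S',K',R')$ is not a spider. Nor is the resulting graph $P_4$-sparse: take any $k\in K_H\setminus\{f(u)\}$, $r\in R_H$ and $s=f^{-1}(k)\in S_H\setminus\{u\}$. After your fill edges, the only edges among $\{w,u,k,r,s\}$ are $wu$, $uk$, $kr$, $ks$, so this five-vertex set induces the two $P_4$'s $w\,u\,k\,r$ and $w\,u\,k\,s$ --- a forbidden configuration. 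The target value $|K_H|$ is right, but it must be achieved differently; the paper does it by making $v=f(u)$ universal in $V(H)\cup\{w\}$ (connect $v$ to $w$ and to all of $S_H\setminus\{u\}$, also $|K_H|$ fill edges), after which $\{u,w\}$ splits off as a separate component of $G-v$. Your constructions for the other cases ($u\in K_H$ via demoting $f^{-1}(u)$ to $R'$, and both options for $u\in R_H$) do check out, and the first of these is a legitimate variant of the paper's.

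The lower bound is also essentially unproved as written. The paper does not route it through Lemma~\ref{lemma:forms} at all: for each case it directly exhibits, in any hypothetical completion with fewer fill edges, a five-vertex set inducing one of the forbidden subgraphs $F_1,\dots,F_6$ (e.g.\ for $u\in S_H$ the $|K_H|-1$ pairs $k_i,s_i$ each force a fill edge via an $F_3$/$F_5$, and a vertex of $R_H$ forces one more). Your plan of minimizing over the three forms of Lemma~\ref{lemma:forms} could in principle work, but as stated it skips the actual work: ``a case split on whether $f(u)$ stays in $K'$ or moves to $R'$'' does not cover the full space of thin spiders on $V(H)\cup\{w\}$ with $u\in K'$, $w\in S'$ (arbitrary subsets of $V(H)$ may be reassigned among $S'$, $K'$, $R'$), and in the $u\in R_H$ case the key fact --- that any completion in which $u$ is not made universal in $R_H$ must spend at least $|K_H|$ fill edges on $S_H\cup K_H$, so that the residual problem on $R_H\cup\{w\}$ costs at least $f'$ --- needs its own forbidden-subgraph argument ($F_3$/$F_4$/$F_6$ in the paper), which you correctly flag as the main obstacle but do not supply.
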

\begin{proof}
1. Let $v \in K_H$ be the neighbor of $u$ in $H$. Then, we can get a $P_4$-sparse graph as follows: if $R_H = \emptyset$, we connect $u$ to all vertices in $K_H \setminus \{v\}$ (we get a thin spider with $S = (S_H \setminus \{u\}) \cup \{w\}$, $K = (K_H \setminus \{v\}) \cup \{u\}$, and $R = \{v\}$, that is, the tail~$uw$ is a wing of a $P_4$ of a thin spider), otherwise we connect $v$ to all vertices in $\{w\} \cup (S_H \setminus \{ u \})$, which makes $v$ universal in $V(H) \cup \{w\}$ and $u, w$ form a separate connected component in $G[V(G) \setminus \{v\}]$; the total number of fill edges (excluding the tail~$u w$) is precisely $|K_H|-1$ if $R_H = \emptyset$ and $K_H$ otherwise.

Moreover, this is the minimum number of fill edges (excluding $u w$) needed. 
First, we note that for each pair $k_i, s_i$ where $k_i \in K_H \setminus \{v\}$ and $s_i \in S_H \setminus \{u\}$, the vertices $v,u,w,k_i,s_i$ define an $F_5$ or an $F_3$ depending on whether the vertices $v,w$ are adjacent or not, which implies that at least $|K_H|-1$ fill edges (excluding $u w$) are needed.
Then, if there is a way of getting a $P_4$-sparse graph by adding fewer than the number of fill edges mentioned in Case~$1$ of the statement of the lemma, it has to be the case that (i)~$R_H \ne \emptyset$, (ii)~each pair $k_i, s_i$ where $k_i \in K_H \setminus \{v\}$ and $s_i \in S_H \setminus \{u\}$ is incident on exactly $1$ fill edge, and (iii)~no more fill edges exist. Let $r \in R_H$ and $k \in K_H \setminus \{v\}$. 
Then, the vertices $v, u, w, k, r$ induce a forbidden subgraph (an $F_5$ if $k$ is non-adjacent to both $u, w$, or an $F_6$ ($F_1$, resp.) if $k$ becomes adjacent to $u$ ($w$, resp.) by means of a fill edge); thus, at least $K_H$ fill edges are needed in this case.

\smallskip
2. Let $v \in S_H$ be the neighbor in $H$ of $u \in K_H$. Then, by connecting $u$ to all vertices in $S_H \setminus \{ v \}$ (which makes $u$ universal in $H$) or by connecting $w$ to all vertices in $K_H \setminus \{ u \}$ yields a $P_4$-sparse graph. Moreover, this is the minimum number of fill edges (excluding the tail~$u w$) that need to be added. Suppose, for contradiction, that we get a $P_4$-sparse graph after having added fewer than $|K_H|-1$ fill edges (excluding $u w$) to the thin spider~$H$. Then, there exists a pair of adjacent vertices~$s, k$ with $s \in S_H \setminus \{ v \}$ and $k \in K_H \setminus \{ u \}$ such that neither $s$ nor $k$ is incident on a fill edge. Then the vertices $u, v, w, s, k$ induce a forbidden subgraph~$F_5$ or $F_3$ if $w$ and $v$ are adjacent or not, respectively, a contradiction.

\smallskip
3. The term~$R_H \setminus N_H[u]$ corresponds to making $u$ universal in $H[R_H]$ in which case the resulting graph is $P_4$-sparse (it is a thin spider with $S = S_H \cup \{ w\}$, $K = K_H \cup \{ u \}$, and $R = R_H \setminus \{u\}$). The term~$|K_H| + f'$ corresponds to adding $|K_H|$ fill edges connecting $w$ to the vertices in $K_H$ and then computing a minimum $P_4$-sparse completion of the graph~$H[R_H]+uw$. Note that no minimum $P_4$-sparse completion of $H+uw$ exists with $u$ not being universal in $R_H$ and with using fewer than $|K_H|$ fill edges incident on the vertices in $S_H \cup K_H$: if there were such a minimum $P_4$-sparse completion~$H'$ of $H+uw$, then in $H'$, there would exist a non-neighbor~$r \in R_H$ and a pair of adjacent vertices $s,k$ where $s \in S_H$ and $k \in K_H$ such that neither $s$ nor $k$ would be incident on a fill edge; but then, in $H'$, the vertices $u,w,r,s,k$ induce an $F_4$ or an $F_3$ if $w,r$ have been connected by a fill edge or not, respectively, which leads to a contradiction. In turn, if $H'$ has at least $|K_H$ fill edges incident on vertices in $S_H \cup K_H$ then $H'[R_H \cup \{w\}]$ would be $P_4$-sparse using fewer than $f'$ fill edges in contradiction to the minimality of $f'$.
\end{proof}

\begin{lemma} \label{lemma:tail_thick}
Consider the addition of a tail~$uw$ to a \emph{thick} spider $H = (S_H$, $K_h$, $R_H$) where $u$ is a vertex of $H$.
Then, for the number~$f$ of fill edges (excluding the tail~$uw$) in a minimum $P_4$-sparse completion of the graph~$H+uw$, it holds:
\begin{enumerate}
\item
if $u \in S_H$,
\[
f \ =\ 
\begin{cases}
\ |K_H|-1=2 & \mbox{if $|K_H| = 3$ and $R_H = \emptyset$} \\
\ |K_H|=3 & \mbox{if $|K_H| = 3$ and $|R_H| = 1$} \\
\ |K_H|+1=4 & \mbox{if $|K_H| = 3$ and $|R_H| \ge 2$} \\
\ |K_H| & \mbox{if $|K_H| \ge 4$ and $R_H = \emptyset$} \\
\ |K_H|+1 & \mbox{if $|K_H| \ge 4$ and $|R_H| \ge 1$};
\end{cases}
\]
\item
if $u \in K_H$, $f = 1$;
\item
if $u \in R_H$, then $f = |K_H| + f'$ where $f'$ is the number of fill edges (excluding $uw$) in a minimum $P_4$-sparse completion of the graph~$H[R_H]+uw$.
\end{enumerate}
\end{lemma}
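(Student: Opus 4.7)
The plan is to parallel the case analysis of Lemma~\ref{lemma:tail_thin}, treating the three cases $u\in S_H$, $u\in K_H$, and $u\in R_H$ separately. In each case I would exhibit an explicit $P_4$-sparse completion of $H+uw$ realizing the claimed value of $f$ (upper bound), and then prove matching lower bounds by locating forbidden $5$-vertex induced subgraphs (from Figure~\ref{fig:forbidden}) in any completion using fewer fill edges; the structural classification of minimum completions from the preceding formations lemma is available as a tool, restricting attention to three candidate shapes of the minimum completion's $P_4$-sparse tree.

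For Case~2 ($u\in K_H$), the construction adds the single edge between $u$ and its unique non-neighbor $f^{-1}(u)\in S_H$; this makes $u$ universal in $H$, so the resulting graph is the join of $\{u\}$ with the disjoint union of $\{w\}$ and the sub-spider $H\setminus\{u\}$, which is $P_4$-sparse. For the lower bound, two vertices $k,k'\in K_H\setminus\{u\}$ (available since $|K_H|\ge 3$) together with $s:=f^{-1}(u)$ yield a $5$-vertex set $\{w,u,k,k',s\}$ that induces the two distinct $P_4$'s $wuks$ and $wuk's$ in $H+uw$, forcing at least one fill edge.

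For Case~3 ($u\in R_H$), the construction joins $w$ to every vertex of $K_H$ (at cost $|K_H|$) so that $(S_H,K_H,R_H\cup\{w\})$ becomes a thick spider whose $R$-part contains exactly $H[R_H]+uw$, which is then completed optimally to $P_4$-sparse at cost $f'$. The matching lower bound uses the forbidden configuration $\{w,u,k,s,s'\}$ where $k\in K_H$ and $s,s'\in S_H$ are both adjacent to $k$ (possible since $|S_H|\ge 3$ and only one $s$ has $f(s)=k$): this $5$-vertex set induces two $P_4$'s $wuks$ and $wuks'$ in $H+uw$, so any $P_4$-sparse completion must add at least one fill edge among $\{wk, ws, us, ws', us'\}$; a charging argument over all such $k$ and pairs $(s,s')$ then shows that at least $|K_H|$ fill edges lie outside $R_H\cup\{u,w\}$, and the remaining fill edges inside $R_H\cup\{u,w\}$ must themselves form a $P_4$-sparse completion of $H[R_H]+uw$, contributing at least $f'$.

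Case~1 ($u\in S_H$) subdivides according to $|K_H|\in\{3,\ge 4\}$ and $|R_H|\in\{0,1,\ge 2\}$. The upper-bound constructions follow a uniform template of making a single vertex universal and then compensating for $R_H$-vertices: when $|K_H|=3$ and $R_H=\emptyset$ make $v:=f(u)$ universal in $V(H)\cup\{w\}$ with $|K_H|-1$ fills; when $|K_H|\ge 4$ and $R_H=\emptyset$ instead make $u$ universal in $H$ with $|K_H|$ fills, which succeeds because $v$ becomes universal inside $H\setminus\{u\}$ after $u$ is removed; when $R_H\ne\emptyset$, additional fill edges are introduced to dominate $R_H$-vertices that would otherwise create forbidden $5$-vertex subgraphs involving $w$. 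The matching lower bounds combine the pair-counting argument of Lemma~\ref{lemma:tail_thin}(1) applied to pairs $(k,s)\in(K_H\setminus\{v\})\times S_H$ with $k=f(s)$ (producing forbidden configurations on $\{w,u,v,k,s\}$) with analogous subgraph obstructions involving $R_H$-vertices, which give rise to the additive corrections $+1$ and $+2$. The principal obstacle will be the tight analysis of this case, in particular the boundary between $|K_H|=3$ and $|K_H|\ge 4$ and the treatment of $|R_H|\ge 2$: the construction must carefully interpolate between a ``$v$-universal'' and a ``$u$-universal'' configuration so that its cost matches the claimed $|K_H|+1$ independently of how large $|R_H|$ is, and the matching lower bound must, via the formations classification, exclude every alternative minimum-completion tree of lower cost---a delicate inspection in which sub-cases hidden by a particular combination of $|K_H|$ and $|R_H|$ are easy to miss.
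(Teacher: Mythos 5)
Your overall architecture (an explicit completion for each upper bound, forbidden five-vertex subgraphs from Figure~\ref{fig:forbidden} for each lower bound) is the same as the paper's, and your Case~2 is complete and correct. But there are two genuine gaps. First, in Case~1 your ``uniform template'' does not actually produce the claimed upper bounds once $R_H$ is large: making $u$ (or $v=f(u)$) universal and then ``dominating'' the vertices of $R_H$ costs an extra $|R_H|$ fill edges, which exceeds the stated $|K_H|+1$ as soon as $|R_H|\ge 3$ (for $|K_H|=3$) or $|R_H|\ge 2$ (for $|K_H|\ge 4$). The missing construction --- which you flag as ``the principal obstacle'' but do not supply --- is to add only the two fill edges $vu$, $vw$ (making $v$ universal) together with the $|K_H|-1$ edges from $w$ to $K_H\setminus\{v\}$; then $(V(H)\setminus\{v\})\cup\{w\}$ induces a spider with clique $K_H\setminus\{v\}$, independent set $S_H\setminus\{u\}$ and head $R_H\cup\{u,w\}$, at a cost of $|K_H|+1$ \emph{independent} of $|R_H|$. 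Without this, the upper bound is unproved in two of the five sub-cases; and the matching lower bounds for Case~1 (in the paper, a five-case analysis each further split according to whether $v$ is incident on $0$, $1$, or $2$ fill edges) are only promised, not carried out.

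Second, your Case~3 lower bound is not closed by the charging argument as stated. The obstructions $\{w,u,k,s,s'\}$ with $s,s'\in S_H$ adjacent to $k$ only force that, for each $k$ not joined to $w$ by a fill edge, at most one $S$-neighbour of $k$ is free of fill edges to $u$ or $w$. A configuration with exactly $|K_H|-1$ fill edges, each joining a distinct vertex of $S_H\setminus\{s_0\}$ to $u$ or $w$, satisfies every one of your obstructions yet has fewer than $|K_H|$ fill edges incident on $S_H\cup K_H$. To exclude it one needs further configurations, e.g.\ $\{u,w,k,k',s\}$ with $k,k'\in K_H$ both clean and $s$ a clean common $S$-neighbour (these induce an $F_6$), plus the $F_3$/$F_4$ cases the paper works through when all but one vertex of $K_H$ is joined to $w$. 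So your threshold claim --- at least $|K_H|$ fill edges incident on $S_H\cup K_H$, after which heredity yields the $+f'$ term --- is the right statement and matches the paper, but the obstruction family you propose is too small to prove it.
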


\begin{proof}
1. Let $v \in K_H$ be the non-neighbor of $u$ in $H$. Let us first consider the case $|K_H| = 3$. If $|R_H| \le 2$, we can get a $P_4$-sparse graph after having added the fill edges $v u$ and $v w$ (this implies that $v$ becomes universal in $(V(H) \setminus \{v\}) \cup \{w\}$) and those connecting $u$ to the vertices in $R_H$ if $R_H$ is non-empty; then the vertices in $(V(H) \setminus \{v\}) \cup \{w\}$ induce a thin spider with $K = (K_H \setminus \{v\}) \cup \{u\}$, $S = (S_H \setminus \{u\}) \cup \{w\}$, and $R = R_H$, for a total of $|K_H| - 1 + |R_H|$ fill edges (excluding the tail~$u w$). If $|R_H| \ge 2$,
a $P_4$-sparse graph is obtained after in addition to the tail~$u w$, we add the fill edges $v u$, $v w$ (again $v$ is universal in $(V(H) \setminus \{v\}) \cup \{w\}$), and the fill edges connecting $w$ to the vertices in $K_H \setminus \{v\}$ (then the vertices in $(V(H) \setminus \{v\}) \cup \{w\}$ induce a thin spider with $K = K_H \setminus \{v\}$, $S = S_H \setminus \{u\}$, and $R = R_H \cup \{u,w\}$), for a total of $|K_H| + 1$ fill edges (excluding $u w$).

Now, consider the case that $|K_H| \ge 4$. If $|R_H| \le 1$, we get a $P_4$-sparse graph after having made $u$ universal by connecting it to the remaining vertices in $S_H$ by using $|K_H| - 1$ fill edges, and adding the fill edge~$u v$, and those connecting $u$ to the vertices in $R_H$ if $R_H$ is non-empty, for a total of $|K_H| + |R_H|$ fill edges (excluding $u w$).
If $|R_H| \ge 1$,
a $P_4$-sparse graph is obtained after having made $v$ universal (by adding the fill edges~$v u$ and $v w$) and after having connected $w$ to all vertices in $K_H \setminus \{v\}$ (then the vertices in $(V(H) \setminus \{v\}) \cup \{w\}$ induce a thick spider with $K = K_H \setminus \{v\}$, $S = S_H \setminus \{u\}$, and $R = R_H \cup \{u,w\}$) for a total of $|K_H| + 1$ fill edges (excluding $u w$).

Below we show the minimality of this solution.
Recall that $v \in K_H$ is the non-neighbor of $u$ in $H$. We consider each of the five cases.
\begin{itemize}
\item[(i)]
\textit{$|K_H| = 3$ and $R_H = \emptyset$}:
Suppose, for contradiction, that there is a $P_4$-sparse completion of $H+uw$ with at most $|K_H|-2 = 1$ fill edge (excluding $u w$). If $v$ is incident on the unique fill edge (which connects $v$ to $u$ or $w$), then the vertices in $S \cup \{v,w\}$ induce an $F_3$. Now suppose that the fill edge is not incident on $v$. Moreover, there exists at least one vertex $s \in S_H \setminus \{u\}$ that is not incident on the fill edge either. Then, the vertices $u, v, w, s, k$ (where $k \in K_H$ is the non-neighbor of $s$ in $H$) induce an $F_5$ if $k, w$ are connected by the fill edge, or an $F_2$ otherwise.

\item[(ii)]
\textit{$|K_H| = 3$ and $|R_H| = 1$}:
Let $R_H = \{r\}$. Suppose, for contradiction, that there is a $P_4$-sparse completion of $H+uw$ with at most $|K_H|-1 = 2$ fill edges (excluding $u w$). We distinguish three cases depending on whether $v$ is incident on $0$, $1$, or $2$ fill edges:
\begin{itemize}
\item[$\bullet$] \textit{$v$ is not incident on a fill edge}:
\ If there exists a pair~$s, k$ of non-neighbors with $s \in S_H \setminus \{u\}$ and $k \in K_H \setminus \{v\}$ such that none of $s, k$ is incident on a fill edge to $u$ or $w$, the vertices $u, v, w, s, k$ induce an $F_2$. Otherwise, since the number of such pairs is $2$, for each such pair~$s, k$, exactly one of $s.k$ is incident on a fill edge to $u$ or $w$, and no other fill edges exist. If there exists a vertex $k \in K_H \setminus \{v\}$ not incident on a fill edge to $w$, the vertices $u, v, w, k, r$ induce an $F_5$, otherwise each of the fill edges connects each of the vertices in $K_H \setminus \{v\}$ to $w$, and then $u, v, w, s, k$ (for any pair~$s, k$ of non-neighbors with $s \in S_H \setminus \{u\}$ and $k \in K_H \setminus \{v\}$) induce an $F_5$.
\item[$\bullet$] \textit{$v$ is incident on $1$ fill edge (to $u$ or $w$)}:
\ Then, there is $1$ more fill edge; hence, there exist $2$ vertices in the set~$(S_H \setminus \{u\}) \cup \{r\}$ that are not incident on a fill edge connecting them to $u$ or $w$, and let these vertices be $p_1, p_2$. Then, the vertices $u, v, w, p_1, p_2$ induce an $F_5$ if $p_1, p_2$ are connected by a fill edge or an $F_3$ otherwise.
\item[$\bullet$] \textit{$v$ is incident on $2$ fill edges connecting it to $u$ and $w$}:
\ Then, there is no other fill edge. Then, the vertices $u, w, k, k', r$ (where $\{k, k'\} = K_H \setminus \{v\}$) induce an $F_6$.
\end{itemize}

\item[(iii)]
\textit{$|K_H| = 3$ and $|R_H| \ge 2$}:
Let $r_1, r_2$ be two vertices in $R_H$. Suppose, for contradiction, that there is a $P_4$-sparse completion of $H+uw$ with at most $|K_H| = 3$ fill edges (excluding $u w$). Again, we distinguish three cases depending on whether $v$ is incident on $0$, $1$, or $2$ fill edges:
\begin{itemize}
\item[$\bullet$] \textit{$v$ is not incident on a fill edge}:
\ Consider the case that there exists a vertex~$k \in K_H \setminus \{v\}$ that is not incident on a fill edge to $w$. Let $s \in S_H$ be the non-neighbor of $k$ in $H$ and $A = (S_H \setminus \{u,s\}) \cup \{r_1, r_2\}$; the set~$A$ contains $3$ vertices which are common neighbors of $v, k$. If at least one of these $3$ vertices (say, $p$) is not incident on a fill edge to $u, w$, then the vertices $u, v, w, k, p$ induce an $F_5$, otherwise all $3$ of these vertices are incident on a fill edge to $u, w$ (then these are all the fill edges) and the vertices $u, v, w, s, k$ induce an $F_2$. On the other hand, if no such vertex~$k$ exists, then both vertices in $K_H \setminus \{v\}$ are incident on a fill edge to $w$, accounting for $2$ of the $3$ fill edges; then there exists a vertex~$s' \in S_H \setminus \{u\}$ which is not incident on a fill edge to $w$, and the vertices $u, v, w, s', k'$ (where $k' \in K_H$ is the non-neighbor of $s'$) induce an $F_5$.
\item[$\bullet$] \textit{$v$ is incident on $1$ fill edge (to $u$ or $w$)}:
\ There are $2$ more fill edges; hence, there exist $2$ vertices in the set~$(S_H \setminus \{u\}) \cup \{r_1, r_2\}$ that are not incident on a fill edge connecting them to $u$ or $w$, and let these vertices be $p_1, p_2$. Then, the vertices $u, v, w, p_1, p_2$ induce an $F_5$ if $p_1, p_2$ are connected by a fill edge or an $F_3$ otherwise.
\item[$\bullet$] \textit{$v$ is incident on $2$ fill edges connecting it to $u$ and $w$}:
\ Then, there is $1$ more fill edge; hence, there exists a vertex $k \in K_H \setminus \{v\}$ that is not incident on the fill edge. Moreover, there exist $2$ vertices in the set~$(S_H \setminus \{u,s\}) \cup \{r_1,r_2\}$ that are not incident on a fill edge connecting them to $u$ or $w$ (where $s \in S_H$ is the non-neighbor of $k$); let these vertices be $p_1, p_2$. Then, the vertices $u, w, k, p_1, p_2$ induce an $F_5$ if $p_1, p_2$ are adjacent or an $F_3$ otherwise.
\end{itemize}

\item[(iv)]
\textit{$|K_H| \ge 4$ and $R_H = \emptyset$}:
Suppose, for contradiction, that there is a $P_4$-sparse completion of $H+uw$ with at most $|K_H|-1$ fill edges (excluding the tail~$u w$). Again, we distinguish three cases depending on whether $v$ is incident on $0$, $1$, or $2$ fill edges:
\begin{itemize}
\item[$\bullet$] \textit{$v$ is not incident on a fill edge}:
\ If there exists a vertex $s \in S_H \setminus \{u\}$ not incident on a fill edge to $u$, $w$ or to its non-neighbor~$k \in K_H$ in $H$, the vertices $u, v, w, s, k$ induce an $F_5$ if $k, w$ are connected by a fill edge, or an $F_2$ otherwise; if all vertices in $S_H \setminus \{u\}$ are incident on a fill edge to $u$, $w$, or their non-neighbor in $K_H$, then there are no more fill edges and the vertices $u, v, w, k, k'$ (for any $k, k' \in K_H \setminus \{v\}$) induce an $F_6$.
\item[$\bullet$] \textit{$v$ is incident on $1$ fill edge (to $u$ or $w$)}:
\ Then, the remaining fill edges are at most $|K_H| - 2$ in total. If there exist two vertices $s_1, s_2 \in S_H \setminus \{u\}$ not incident on a fill edge to $u$ or $w$, the vertices $u, v, w, s_1, s_2$ induce an $F_5$ or an $F_3$ depending on whether $s_1, s_2$ are connected by a fill edge or not. Thus, there cannot be two such vertices $s_1, s_2$; this implies that the remaining fill edges are precisely $|K_H| - 2$, and they connect all but one vertex in $S_H \setminus \{u\}$ to $u$ or $w$; let that vertex be $s$. Then, the vertices $u, v, w, s, k'$ (where $k' \in K_H \setminus \{v\}$ is a neighbor of $s$ in $H$) induce an $F_6$ or an $F_1$ if the fill edge incident on $v$ connects it to $u$ or $w$ respectively.
\item[$\bullet$] \textit{$v$ is incident on $2$ fill edges connecting it to $u$ and $w$}:
\ Then, the remaining fill edges are at most $|K_H|-3$ in total; hence, there exist two pairs of non-adjacent vertices $s_1,k_1$ and $s_2,k_2$ with $s_1,s_2 \in S_H \setminus \{u\}$ and $k_1, k_2 \in K_H \setminus \{v\}$ such that none of $s_1, s_2, k_1, k_2$ is incident on a fill edge to $u$ or $w$. Let $A = S_H \setminus \{u, s_1, s_2\}$; the set~$A$ is the set of $|K_H| - 3$ common neighbors of $k_1, k_2$ in $S_H$ other than $u$. If there exists a vertex $s \in A$ not incident on a fill edge to $u$ or $w$, then the vertices $u, w, k_1, k_2, s$ induce an $F_6$, otherwise, the remaining fill edges are precisely $|K_H|-3$ and they connect each of the vertices in $A$ to $u$ or $w$, that is, none of the vertices in $K_H \setminus \{v\}$ is incident on a fill edge. Then, the vertices $u, w, s_1, s_2, k$ (where $k$ is any vertex in $K_H \setminus \{v,k_1,k_2\}$) induce an $F_3$.
\end{itemize}

\item[(v)]
\textit{$|K_H| \ge 4$ and $|R_H| \ge 1$}:
Let $r \in R_H$. Suppose, for contradiction, that there is a $P_4$-sparse completion of $H + uw$ with at most $|K_H|$ fill edge (excluding the tail~$u w$). Again, w distinguish three cases depending on whether $v$ is incident on $0$, $1$, or $2$ fill edges:
\begin{itemize}
\item[$\bullet$] \textit{$v$ is not incident on a fill edge}:
\ If there exists a vertex $s \in S_H \setminus \{u\}$ not incident on a fill edge to $u$, $w$, or to its non-neighbor~$k \in K_H$ in $H$, the vertices $u, v, w, s, k$ induce an $F_5$ if $k, w$ are connected by a fill edge, or an $F_2$ otherwise; if all vertices in $S_H \setminus \{u\}$ are incident on a fill edge to $u$, $w$, or their non-neighbor in $K_H$, which account for the $|K_H| - 1$ of the $|K_H|$ fill edges, there exist vertices $k, k' \in K_H  \setminus \{v\}$ which are not incident on a fill edge and then the vertices $u, v, w, k, k'$ induce an $F_6$.
\item[$\bullet$] \textit{$v$ is incident on $1$ fill edge (to $u$ or $w$)}:
\ Then, the remaining fill edges are at most $|K_H| - 1$ in total. If all vertices in $K_H \setminus \{v\}$ are incident on a fill edge to $w$, then no more fill edges exist and the vertices $u, v, w, s_1, s_2$ (for any $s_1, s_2 \in S_H \setminus \{u\}$) induce an $F_3$. Thus, there exists $k \in K_H \setminus \{v\}$ which is not incident on a fill edge to $w$. The number of common neighbors of $v, k$ in $S_H \cup {r}$ is $|K_H| - 1$. If each of these vertices is incident on a fill edge to $u$ or $w$, then no more fill edges exist and the vertices $u, v, w, s, k'$ induce an $F_6$ or an $F_1$ depending on whether the fill edge incident on $v$ connects it to $u$ or $w$, respectively, where $s \in S_H$ is the non-neighbor of $k$ and $k'$ is any vertex in $K_H \setminus \{v, k\}$; hence, there exists a common neighbor~$p$ not incident on a fill edge to $u$ or $w$ and the vertices $u, v, w, k, p$ induce an $F_6$ or an $F_1$ depending on whether the fill edge incident on $v$ connects it to $u$ or $w$, respectively.
\item[$\bullet$] \textit{$v$ is incident on $2$ fill edges connecting it to $u$ and $w$}:
\ Then, the remaining fill edges are at most $|K_H| - 2$ in total; hence, there exists a pair of non-adjacent vertices~$s, k$ (where $s \in S_H \setminus \{u\}$ and $k \in K_H \setminus \{v\}$) which are not incident on a fill edge to $u$ or $w$. Let $A = (S_H \setminus \{u, s\}) \cup \{r\}$; the set~$A$ is a set of $|K_H| - 1$ neighbors of $k$ other than $u$. Then, there exists a vertex~$p_1$ in $A$ which is not incident on a fill edge to $u$ or $w$. If there exists a second vertex~$p_2$ in $A$ not incident on a fill edge to $u$ or $w$, then the vertices $u, w, k, p_1, p_2$ induce an $F_5$ if $p_1, p_2$ are connected by a fill edge or an $F_3$ otherwise. If each vertex in $A \setminus \{p_1\}$ is incident on a fill edge to $u$ or $w$, then the fill edges incident on these vertices account for the remaining $|K_H| - 2$ fill edges and the vertices $u, w, s, k_1, k_2$ (for any vertices $k_1, k_2 \in K_H \setminus \{v, k\}$) induce an $F_6$.
\end{itemize}

\end{itemize}
Therefore, if we use fewer than the stated number of fill edges, in each case, the resulting graph contains an induced forbidden subgraph, a contradiction.

\smallskip
2. Let $v \in S_H$ be the non-neighbor of $u$ in $H$. Then, we get a $P_4$-sparse graph by connecting $u$ to $v$; thus, $u$ becomes universal in $V(H) \cup \{w\}$. This is the minimum number of fill edges (excluding the tail~$u w$) that need to be added since for any pair of non-neighbors $s, k$ with $s \in S_H \setminus \{ v \}$ and $k \in K_H \setminus \{ u \}$, the vertices $u, v, w, s, k$ induce a forbidden subgraph~$F_3$, a contradiction.

\smallskip
3. By connecting $w$ to all vertices in $K_H$ and then computing a minimum $P_4$-sparse completion of $H[R_H \cup \{w\}]$, we get a $P_4$-sparse graph and the number of fill edges needed is $|K_H| + f'$.

To prove the minimality of this number of fill edges, suppose, for contradiction, that we can get a $P_4$-sparse graph from $H+uw$ after having added at most $|K_H| - 1$ fill edges incident on vertices in $S_H \cup K_H$ (excluding the tail~$u w$). Then, there exists a pair~$s_1, k_1$ of non-neighbors in $H$ with $s_1 \in S_H$ and $k_1 \in K_H$ none of which is incident on a fill edge to $u$ or $w$. 
We distinguish the following two cases that cover all possibilities.
\begin{itemize}
\item
\emph{Each of the vertices in $K_H \setminus \{k_1\}$ is incident on a fill edge to $w$}. These are precisely all the $|K_H| - 1$ fill edges; hence none of the vertices in $S_H \setminus \{s_1\}$ is incident on a fill edge. Then, the vertices $u, w, k_1, s_2, s_3$ (for any $s_2, s_3 \in S_H \setminus \{s_1\}$) induce an $F_3$.
\item
\emph{There exists at least one vertex in $K_H \setminus \{ k_1 \}$ that is not incident on a fill edge to $w$}. Let that vertex be $k_2$. Then, if there exists another vertex~$k_3 \in K_H \setminus \{k_1, k_2\}$ that is not incident on a fill edge to $w$ as well, the vertices $u, w, k_2, k_3, s_1$ induce an $F_6$. On the other hand, if each of the vertices in $K_H \setminus \{k_1, k_2\}$ is incident on a fill edge to $w$ (which implies that $k_3$ is adjacent to $w$), then these fill edges are $|K_H| - 2$ in total, with only $1$ remaining. If the non-neighbor~$s_3$ of $k_3$ in $S_H$ is not incident on a fill edge to $u$ or $w$, then the vertices $u, w, k_1, k_2, s_3$ induce an $F_6$ whereas if it is adjacent to $u$ or $w$, then there are no more fill edges. In particular, if $s_3$ is adjacent to $u$, the vertices $u, k_1, k_3, s_1, s_3$ induce an $F_6$ and if it is adjacent to $w$, the vertices $u, w, k_2, s_1, s_3$ induce an $F_4$.
\end{itemize}
In each case, we get a contradiction. Thus every minimum $P_4$-sparse completion of $H+uw$ requires at least $|K_H|$ fill edges incident on vertices of $S_H \cup K_H$. Now, if there exists a minimum $P_4$-sparse completion~$H'$ of $H+uw$ having fewer than $|K_H| +f'$ fill edges, then the fact that at least $|K_H|$ of them are incident on vertices in $S_H \cup K_H$ implies that $H'[R_H \cup \{w\}]$ is $P_4$-sparse using fewer than $f'$ fill edges in contradiction to the minimality of $f'$.
\end{proof}

If the (thin or thick) spider~$H$ belongs to a more general $P_4$-sparse graph, then Lemmas \ref{lemma:tail_thin} and \ref{lemma:tail_thick} imply the following result.

\begin{corollary} \label{corol:special}
Let $u$ be a vertex of a $P_4$-sparse graph to which we add the tail~$uw$. Let $t_0 \cdots t_h u$ be the path in the $P_4$-sparse tree of $G$ from the the root~$t_0$ to the leaf for $u$ and let $V_0, \ldots, V_h$ be the corresponding vertex sets as mentioned before. Then, if the parent~$t_h$ of $u$ is a $2$-node corresponding to a spider~$H$, the number of fill edges needed for a minimum $P_4$-sparse completion of the graph~$G+uw$ (excluding the tail~$uw$) does not exceed the minimum between
\begin{itemize}
\item[(i)]
the minimum number given by Lemmas \ref{lemma:tail_thin} and \ref{lemma:tail_thick} (if $H$ is thin or thick, respectively) augmented by $|N_G(u) \cap (V_0 \cup \cdots \cup V_{h-1})|$ \qquad and
\item[(ii)]
$\min_{t_i = 1\text{- or } 2\text{-node}} \{\, |N_G(u) \cap (V_0 \cup \ldots \cup V_{i-1})| + |(V_i \cup \ldots \cup V_h) \setminus N_G(u)| \,\}$.
\end{itemize}
\end{corollary}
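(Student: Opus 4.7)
The plan is to exhibit two explicit $P_4$-sparse completions of $G+uw$, one whose fill-edge count matches the quantity in~(i) and another matching the quantity in~(ii); since any concrete completion is an upper bound on the minimum, the corollary follows at once.

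\textbf{Plan for (i).}
I would graft a locally optimal completion of the spider into the global tree. Let $H'$ be a minimum $P_4$-sparse completion of $H+uw$ furnished by Lemma~\ref{lemma:tail_thin} or Lemma~\ref{lemma:tail_thick} according as $H$ is thin or thick, and let $T_{H'}$ be its $P_4$-sparse tree. In $T_G$, replace the subtree rooted at $t_h$ by $T_{H'}$, merging the new root with $t_{h-1}$ if needed to meet the label rules. The key observation is that for any $z \in V_j$ with $j<h$, the least common ancestor in the modified tree of $z$ with any leaf of the replaced subtree is still $t_j$; hence every leaf of $T_{H'}$ inherits exactly the outside adjacencies $N_G(u) \cap (V_0 \cup \cdots \cup V_{h-1})$ that each leaf of $H$ had in $G$. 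For the original vertices of $H$ these adjacencies were already present, so the only fresh cost is connecting the new leaf $w$ to its outside neighbours, which requires $|N_G(u) \cap (V_0 \cup \cdots \cup V_{h-1})|$ fill edges, yielding the expression in~(i).

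\textbf{Plan for (ii).}
For a fixed $1$- or $2$-node $t_i$ on the root-to-$u$ path, I would apply Formation~$2$ at $t_i$ taking $Z$ to be the vertex set of the subtree rooted at $t_i$. Concretely, remove the leaf for $u$ from its original position (restructuring the subtree at $t_h$ as a valid $P_4$-sparse tree for $H\setminus\{u\}$ if necessary) and replace $t_i$ by a new $1$-node whose children are the leaf $u$ and a new $0$-node; the new $0$-node has children $w$ and the original $t_i$ with its subtree unchanged. A routine LCA calculation in this tree shows: $u$ and $w$ are adjacent (via the new $1$-node); $u$ becomes universal in $V_i \cup \cdots \cup V_h$ (again via the new $1$-node); $w$ is non-adjacent to that set (blocked by the new $0$-node); and both $u,w$ have LCA $t_j$ with each $z \in V_j$ for $j<i$, so their adjacencies to $V_0 \cup \cdots \cup V_{i-1}$ coincide with those of $u$ in $G$. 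Hence the fill edges required are $|(V_i \cup \cdots \cup V_h) \setminus N_G(u)|$ to make $u$ universal in the lower part plus $|N_G(u) \cap (V_0 \cup \cdots \cup V_{i-1})|$ to supply $w$ with its outside adjacencies; minimising over admissible $t_i$ gives~(ii).

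\textbf{Main obstacle.}
The nontrivial mathematical content is already packaged in Lemmas~\ref{lemma:tail_thin} and~\ref{lemma:tail_thick}; what remains is the careful bookkeeping of LCAs and of the $P_4$-sparse tree labelling rules after grafting $T_{H'}$ or inserting Formation~$2$. The subtlety that will demand the most care is maintaining a valid tree at the seams: consecutive $0$- or $1$-nodes can arise and must be merged, and when $u$ is excised from the spider at $t_h$ the residual subgraph $H\setminus\{u\}$ may need to be re-represented by a suitable $P_4$-sparse tree. In every case these adjustments preserve the represented graph, so they affect neither the fill-edge count nor the correctness of the bounds.
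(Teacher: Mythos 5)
Your proposal is correct and follows essentially the same route as the paper: the paper justifies the corollary only by the remark that case~(i) corresponds to performing a minimum $P_4$-sparse completion of $H+uw$ locally while leaving the rest of $T_G$ unchanged, and case~(ii) to making $u$ universal in $V_i\cup\cdots\cup V_h$ and applying Formation~$2$ above a $1$- or $2$-node $t_i$ --- precisely the two explicit completions you construct. Your additional bookkeeping (LCA preservation, merging like-labelled adjacent nodes, re-representing $H\setminus\{u\}$) fills in details the paper leaves implicit but does not change the argument.
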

\noindent
Case~(i) of Corollary~\ref{corol:special} corresponds to doing a minimum $P_4$-completion of the graph $H+uw$ and not changing the rest of the $P_4$-sparse tree~$T_G$ of $G$ whereas Case~(ii) corresponds to making $u$ universal in $H$ and then trying Formation~$2$ above each $1$-node or $2$-node in the path~$t_0 \cdots t_h$ of $T_G$.

\subsection{The Algorithm}

Recall that $t_0 t_1 \cdots t_h u$ is the path in the $P_4$-sparse tree~$T_G$ of $G$ from the root~$t_0$ to the leaf for $u$, and $V_i$ ($0 \le i < h$) is the set of vertices associated with the leaves of the subtrees rooted at the children of $t_i$ except for $t_{i+1}$ and $V_h$ is the set of vertices associated with the leaves of the subtrees rooted at the children of $t_h$ except for the leaf corresponding to $u$. See Figure~\ref{fig:labelV}.

Next we prove the conditions under which a minimum $P_4$-sparse completion of the graph~$G+uw$ uses fewer fill edges than when using Formation~$1$ or $2$.

\begin{lemma} \label{lemma:p4_formation}
There exists a minimum $P_4$-sparse completion~$G_{OPT}$ of the graph $G+uw$ which uses fewer fill edges than when using Formation $1$ or $2$ if and only if $uw$ is a wing of a $P_4$ in $G_{OPT}$ which implies that
(i)~either $u$ is a vertex of a spider in $G$ (Lemmas \ref{lemma:tail_thin} and \ref{lemma:tail_thick} apply)
(ii)~or there exists $j$ ($0 \le j < h$) such that $t_j$ is a $1$-node, $t_{j+1}$ is a $0$-node, and there exist vertices $a,b$ such that $a \in V_j$ is universal in $G[V_j]$ and $b \in V_{j+1}$ is isolated in $G[V_{j+1}]$.\\
Then, in $G_{OPT}$, the vertices $u,w,a,b$ induce a $P_4$ in a spider $(S,K,R)$ with $S = \{w,b\}$, $K = \{u,a\}$ and $R = (V_{j+1} \setminus \{b\}) \cup V_{j+2} \cup \cdots \cup V_h$.
\end{lemma}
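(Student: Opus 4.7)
The plan is to prove the biconditional and its structural consequence together, since the forward direction essentially constructs the $4$-vertex thin spider on $\{u,w,a,b\}$.

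\textbf{Forward direction.} Assume $G_{OPT}$ uses strictly fewer fill edges than every completion obtained via Formation~$1$ or Formation~$2$. By Lemma~\ref{lemma:forms}, every minimum $P_4$-sparse completion of $G+uw$ falls into one of the three cases listed there, and the two Formations referenced in the present lemma coincide with Cases~$2$ and $3$; hence $G_{OPT}$ must realise Case~$1$, that is, $u$ and $w$ share a $2$-node parent in $T_{G_{OPT}}$ whose subtree is a thin spider $(S,K,R)$ with $u\in K$ and $w\in S$. Because $|K|=|S|\ge 2$, pick $a\in K\setminus\{u\}$ and let $b=f^{-1}(a)\in S$; the thin-spider adjacencies yield an induced $P_4$ on $w,u,a,b$ whose wing is $uw$. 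This proves one half of the biconditional; the converse is immediate because any minimum $G_{OPT}$ in which $uw$ is a wing places us in Case~$1$, which is structurally distinct from Formations~$1$ and $2$ and, being optimal, must then match or beat both counts—equality would make one of the Formations optimal as well, so the strict-inequality regime is exactly the wing regime.

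\textbf{Forcing (i) or (ii).} If $t_h$ is a $2$-node then $u$ is already a vertex of a spider in $G$ and case~(i) holds, with Lemmas~\ref{lemma:tail_thin}--\ref{lemma:tail_thick} supplying the completion. Otherwise $t_h$ is a $0$- or $1$-node. Because $u,a\in K$ of the spider, $ua$ is an edge of $G_{OPT}$; by minimality we must have $ua\in E(G)$ (any replacement fill edge could equally well be absorbed into a Formation~$2$), so the least common ancestor of $u$ and $a$ in $T_G$ is a $1$-node on the path $t_0\cdots t_h$, placing $a\in V_j$ for some $j$ with $t_j$ a $1$-node. Dually $ub\notin E(G)$, so $b\in V_{j'}$ with $t_{j'}$ a $0$-node. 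To minimise fill edges, the spider's $2$-node in $T_{G_{OPT}}$ is positioned with a $1$-node parent; this joins every vertex outside the spider to every vertex inside. Therefore the outside vertices $V_j\setminus\{a\}$, which are already adjacent to $u$ in $G$ through the $1$-node $t_j$, must also be adjacent to $a$ in $G$, forcing $a$ to be universal in $G[V_j]$. The symmetric analysis at the $0$-node~$t_{j'}$ forces $b$ to be isolated in $G[V_{j'}]$ and further $j'=j+1$: if $j'>j+1$, the intermediate levels contain vertices that can be placed neither entirely inside $R$ (which would require fill edges to $u$ exceeding the Formation~$2$ bound) nor entirely outside (which would contradict the $0$-node label of $t_{j+1}$ and the non-adjacency of $b$ to $u$), yielding case~(ii).

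\textbf{Backward direction and main obstacle.} For the converse, given (i) or (ii), exhibit the spider explicitly: in case~(i) invoke Lemmas~\ref{lemma:tail_thin}--\ref{lemma:tail_thick} and Corollary~\ref{corol:special}; in case~(ii) form the thin spider with $K=\{u,a\}$, $S=\{w,b\}$, $R=(V_{j+1}\setminus\{b\})\cup V_{j+2}\cup\cdots\cup V_h$, and attach its $2$-node under a $1$-node placed at the position of $t_j$. The universality of $a$ in $V_j$ and isolation of $b$ in $V_{j+1}$ ensure that every spider adjacency other than the $u$-to-$R$ edges is already present in $G$, so the only fill edges required are those realising $u$ universal in $R$; a direct comparison with the sums for Formations~$1$ and $2$ from Lemma~\ref{lemma:forms} shows this count is strictly smaller exactly when (ii) holds. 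The main obstacle is the tree-surgery step in the forward direction that pins down $j'=j+1$ together with the universality and isolation conditions: it requires a case analysis over the labels of the intermediate nodes $t_{j+1},\dots,t_{j'-1}$ and over which side (inside or outside the spider) each vertex of the corresponding $V_i$'s lies, repeatedly relocating vertices while preserving both the $P_4$-sparse structure and the minimum fill-edge count.
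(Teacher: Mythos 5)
Your skeleton matches the paper's (reduce to Case~1 of Lemma~\ref{lemma:forms}, locate $a$ and $b$ in $T_G$, and force the stated conditions), but two essential steps are missing. First, you never prove that the spider of $G_{OPT}$ containing the $P_4$~$wuab$ has $|K|=|S|=2$, which is precisely what the final structural claim $K=\{u,a\}$, $S=\{w,b\}$ asserts; you simply pick some $a\in K\setminus\{u\}$. The paper devotes a dedicated case analysis to this: a thin spider with $|K_W|\ge 3$ would force the edge $ab$ to belong to a spider already present in $G$ (and hence $u$ as well, contradicting the reduction to case~(ii)), while a thick spider is dominated by making $u$ universal with one fill edge and applying Formation~2. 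Second, the step you yourself flag as ``the main obstacle'' --- pinning down $j'=j+1$, the universality of $a$ in $G[V_j]$, the isolation of $b$ in $G[V_{j+1}]$, and the fact that $t_j$ is a $1$-node (you assert the least common ancestor of $u,a$ is a $1$-node, but a priori it could be a $2$-node) --- is not actually carried out. The paper resolves it not by tree surgery and vertex relocation but by an explicit count: Properties P1 and P2 of Lemma~\ref{lemma:p4sparse_tree} force a specific set of fill edges in $G_{OPT}$, and comparing their number with Formation~2 applied right below $t_j$ yields four emptiness conditions whose conjunction is exactly case~(ii). Your sketch of that comparison does not establish any of the four conditions, and your universality argument conflates adjacency in $G_{OPT}$ (which P2 gives) with adjacency in $G$ (which needs the optimality count).

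A smaller but real error: your closing claim that the new-$P_4$ count ``is strictly smaller exactly when (ii) holds'' overstates the lemma, which only asserts that beating both Formations \emph{implies} (i) or (ii). Condition~(ii) is necessary, not sufficient: with $t_j$ a $1$-node and $t_{j+1}$ a $0$-node, the new formation costs $|V_j|-1+|V_{j+1}|-1$ plus a common remainder, versus $|V_{j+1}|$ plus the same remainder for Formation~2 at $t_j$, so it wins only when $|V_j|=1$. This is why the algorithm merely adds this candidate to the minimization rather than assuming it is optimal whenever (ii) holds.
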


\begin{proof}
If Formation~$1$ or Formation~$2$ is not to be used then Lemma~\ref{lemma:forms} implies that in $G_{OPT}$, $uw$ is the wing of a $P_4$. If $u$ is a vertex of a spider, then Lemmas \ref{lemma:tail_thin} and \ref{lemma:tail_thick} apply. So, in the following, assume that $u$ is not a vertex of a spider.

For the tail~$uw$ to be the wing of a $P_4$ in $G_{OPT}$, we can show that there exist vertices $x,y$ such that $u x y$ is a $P_3$ in the graph~$G$: if $u,x,y$ do not all belong to the same connected component of $G$, then we could add the tail~$uw$ to the connected component of $G$ to which $u$ belongs, thus using fewer fill edges than in $G_{OPT}$, a contradiction; if $u,x,y$ belong to the same connected component of $G$ but do not form a $P_3$, then because $u,y$ are not adjacent in $G_{OPT}$ and thus neither in $G$, $u,y$ are at distance~$2$ in $G$ and there exists a $P_3$~$uay$ in $G$ (note that $u,y$ cannot be at distance~$\ge 4$ in $G$ since then $G$ would contain an induced $P_5 = F_2$, and they cannot be at distance~$3$ either since then there exists a $P_4$~$u a b y$ in $G$ and $u$ would be a vertex of a spider in $G$).

Therefore, in the following, consider that the minimum $P_4$-sparse completion~$G_{OPT}$ of $G+uw$ contains an induced $P_4$~$wuab$ such that the graph~$G$ contains the induced $P_3$~$uab$. 
So, since $u,b$ are not adjacent in $G_{OPT}$, then they are not adjacent in $G$ either, and thus their least common ancestor~$t_k$ in the $P_4$-sparse tree~$T_G$ of $G$ is a $0$-node; it cannot be a $2$-node since then $u$ would be a vertex of a spider. Moreover, $a$ is a common neighbor of both $u,b$ and thus the least common ancestor~$t_j$ of $a,u$ in $T_G$ is a $1$- or a $2$-node (in the latter case, $a$ is a vertex of the clique of the spider) and $j < k$.

Let us now try forming the $P_4$~$wuab$, which clearly will belong to a spider, say $W = (S_W, K_W, R_W)$.
We show that $|S_W| = |K_W| = 2$. First, note that the edge~$a b$ cannot belong to a spider in $G$, since then $u$ would belong to that spider as well (note that the vertices of $G$ not belonging to a spider are either adjacent to all vertices of the spider or to none of them),
So, suppose for contradiction that the spider~$W$ has $|S_W| = |K_W| \ge 3$ and let $w, b, s \in S_W$ and $u, a, k \in K_W$ with the corresponding $S$-$K$ pairs being $w$ and $u$, $b$ and $a$, and $s$ and $k$. The spider~$W$ can be thin or thick.
\begin{itemize}
\item
\emph{The spider~$W$ is thin}.
\ Then, $b a \in E(G)$ otherwise the removal of $b a$ would produce a $P_4$-sparse graph with fewer fill edges ($b$ is isolated in $G[V(W)]$), a contradiction; similarly, $s k \in E(G)$. Moreover, $a k \in E(G)$: as above, if $a, k$ do not belong to the same connected component of the induced subgraph~$G[V(W)]$, then by adding the tail~$uw$ to the connected component of $G[V(W)]$ to which $u$ belongs would result into fewer fill edges; if $a, k$ belong to the same connected component of $G[V(W)]$ then there exists a chordless path~$\rho$ connecting them in the subgraph~$G[K_W \cup R_W]$ and the vertices in $V(\rho) \cup \{b,s\}$ induce a $P_\ell$ with $\ell \ge 5$, in contradiction to the $P_4$-sparseness of $G$. But then, $G$ contains the $P_4$~$b a k s$ and $a b$ belongs to a spider.
\item
\emph{The spider~$W$ is thick}.
\ Then, $w \in S_W$ is incident on the tail~$uw$ and $|K_W|-2 \ge 1$ fill edges. Since we can make $u$ universal in $G[V(W) \setminus \{w\}]$ by using a single fill edge and then use Formation~$2$, it is clear that building the spider~$W$ does not result into fewer fill edges.
\end{itemize}
Thus, $G_{OPT}$ with a spider~$W$ with $|K_W| \ge 3$ has no fewer fill edges than if we use Formation~$2$. Therefore, the $P_4$~$wuab$ belongs to a spider with clique size equal to $2$, which thus is thin.
Then, Property~P1 in Lemma~\ref{lemma:p4sparse_tree} implies that $w$, $u$, and $a$ are adjacent to all the neighbors of $b$ except for $a$ in $G_{OPT}$ and thus at least to the neighbors of $b$ in $G$; thus, in $G_{OPT}$,
\begin{itemize}
\item
fill edges connect vertex~$w$ to the vertices in $((V_0 \cup \cdots \cup V_{k-1}) \setminus \{a\}) \cap N_G(b)  = [(V_0 \cup \cdots \cup V_{k-1}) \setminus \{a\}] \cap N_G(u)$;
\item
vertex~$u$ and $w$ are adjacent to all neighbors of $b$ in $V_k$, that is, to the vertices in $(V_k \cap N_G(b)) \setminus N_G(u)$;
\item
vertex~$a$ is adjacent to all the vertices in $(V_j \cap N_G(b))$ and thus fill edges connect $a$ to all vertices in $(V_j \cap N_G(b)) \setminus N_G[a]$ $= (V_j \cap N_G(u)) \setminus N_G[a]$.
\end{itemize}
Additionally, Property~P2 in Lemma~\ref{lemma:p4sparse_tree} implies that because $a$ is adjacent to all the vertices in $V_{j+1} \cup \cdots \cup V_h$ and to the vertices in $V_j \cap N_G(a)$ in $G_{OPT}$ (because it is adjacent to them in $G$), then so must be vertex~$u$ in $G_{OPT}$; thus, in $G_{OPT}$, fill edges connect $u$ to all the vertices in $(V_{j+1} \cup \cdots \cup V_h) \setminus N_G(u)$ and the vertices in $(V_j \cap N_G(a)) \setminus N_G(u)$ (the set~$(V_j \cap N_G(a)) \setminus N_G(u)$ is non-empty if and only if $t_j$ is a $2$-node).

Now, let us consider using Formation~$2$ right below node~$t_j$ in the $P_4$-sparse tree~$T_G$ of $G$; then, the number of fill edges is $|(V_{j+1} \cup \cdots \cup V_h) \setminus N_G(u)| + |(V_0 \cup \cdots \cup V_{j}) \cap N_G(u)|$; the former term corresponds to fill edges incident on $u$, the latter to fill edges incident on $w$.
Then, because $j < k$ and $|((V_0 \cup \cdots \cup V_{k-1}) \setminus \{a\}) \cap N_G(u)| = |(V_0 \cup \cdots \cup V_{k-1}) \cap N_G(u)| - 1$, the only possibility for $G_{OPT}$ to use fewer fill edges than using Formation~$2$ after node~$t_j$ requires that
\begin{enumerate}
\item
$k = j+1$;
\item
$(V_j \cap N_G(u)) \setminus N_G[a] = \emptyset$;
\item
$(V_k \cap N_G(b)) \setminus N_G(u) = \emptyset$ which implies that $b$ is isolated in $G[V_k]$;
\item
$(V_j \cap N_G(a)) \setminus N_G(u) = \emptyset$ which implies that $t_j$ is a $1$-node.
\end{enumerate}
Requirement~$4$ implies that $V_j \cap N_G(u) = V_j$ which together with Requirement~$2$ imply that $N_G[a] = V_j$, that is, $a$ is universal in $G[V_j]$, and we have the second case in the statement of the lemma.
\end{proof}

Now we are ready to describe our algorithm for counting the number of fill edges in a minimum $P_4$-sparse completion of the graph~$G + uw$.


\bigskip\noindent
\textbf{Algorithm $P_4$-sparse-Tail-Addition}\\
\emph{Input}: a $P_4$-sparse graph~$G$, a vertex~$u \in V(G)$, and a tail~$uw$ to be added to $G$.\\
\emph{Output}: the minimum number of fill edges (excluding the tail~$uw$) needed in a\\
\phantom{\emph{Output}:} $P_4$-sparse completion of the graph~$G+uw$.

\smallskip\noindent
\textbf{if}\  $|V(G)| = 1$ \ \textbf{then} \quad $\{$\textit{$V(G) = \{u\} \Longrightarrow$ the graph~$G+uw$ is $P_4$-sparse}$\}$\\
\phantom{xx} \textbf{return}(0);

\smallskip\noindent
compute the path~$t_0 t_1 ... t_h$ ($h \ge 1$) from the root $t_0$ of the $P_4$-sparse tree of $G$ to the parent-node~$t_h$ of the leaf corresponding to $u$;\\
compute the sets of vertices $V_i$. $0 \le i \le h$ (see Figure~\ref{fig:labelV});

\smallskip\noindent
$min \gets |N_G(u)|$; \quad $\{$\textit{corresponds to Formation~$1$}$\}$

\smallskip\noindent
$\{$\textit{check for Formation~$2$ (Lemma~\ref{lemma:forms}(iii) and Case~(ii) of Corollary~\ref{corol:special})}$\}$\\
\textbf{for}\  each $t_i$ ($i=0,1,\ldots,h$) that is a $1$- or a $2$-node \ \textbf{do}\\
\phantom{xx} $\ell \gets |N_G(u) \cap (V_0 \cup \cdots \cup V_{i-1})| + |(V_i \cup \cdots \cup V_h) \setminus N_G(u)|$;\\
\phantom{xx} update $min$ if $\ell < min$;\\

\smallskip\noindent
$\{$\textit{check for new $P_4$ formation (Lemma~\ref{lemma:p4_formation})}$\}$\\
\textbf{for}\  each $i=0,1,\ldots,h-1$ such that $t_i$ is a $1$-node and $t_{i+1}$ is a $0$-node \ \textbf{do}\\
\phantom{xx} \textbf{if}\  there exist vertex~$a \in V_i$ such that $a$ is universal in $V_i$
\ \textbf{and}\\
\phantom{xx \textbf{if}} 
\phantom{there exist} vertex~$b\in V_{i+1}$ such that $b$ has no neighbors in $V_{i+1}$ \ \textbf{then}\\
\phantom{xx} $\ell \gets |N_G(u) \cap (V_0 \cup \cdots \cup V_{i-1})| + |V_i \setminus \{a\}| + |V_{i+1} \setminus \{b\}| +\\
\phantom{xxxxxxxxxxxxxxxxx}
\phantom{xxxxxxxxxxxxxxxxx} |(V_{i+2} \cup \cdots \cup V_h) \setminus N_G(u)|$;\\
\phantom{xx} update $min$ if $\ell < min$;\\

\smallskip\noindent
$\{$\textit{check the cases if $t_h$ is a $2$-node and apply case~(i) of Corollary~\ref{corol:special}}$\}$\\
\textbf{if} \ $t_h$ is a $2$-node \ \textbf{then}\\
\phantom{xx} $\ell \gets$ number of fill edges according to the cases of Lemmas \ref{lemma:tail_thin} or \ref{lemma:tail_thick};\\
\phantom{xx} $\ell \gets \ell + |N_G(u) \cap (V_0 \cup \cdots \cup V_{h-1})|$;
\qquad $\{$\textit{Case~(i) of Corollary~\ref{corol:special}}$\}$\\
\phantom{xx} update $min$ if $\ell < min$;\\
\textbf{return}($min$);

\bigskip
Algorithm $P_4$-sparse-Tail-Addition can be easily augmented to return a minimum cardinality set of fill edges. 
The correctness of the algorithm follows from Lemmas \ref{lemma:forms}, \ref{lemma:tail_thin}, \ref{lemma:tail_thick}, \ref{lemma:p4_formation}, and Corollary~\ref{corol:special}.
Let $G$ be the given graph and let $n$ be the number of its vertices. If the $P_4$-sparse tree~$T_G$ of $G$ is given, an $O(n)$-time traversal of the tree enables us to compute the path $t_0 t_1 \cdots t_h u$ and the number of neighbors and non-neighbors of $u$ in each of the sets $V_0, \ldots, V_h$; additionally, the height of $T_G$ is $O(n)$ and thus $h = O(n)$. Since the conditions of Lemmas \ref{lemma:tail_thin} and \ref{lemma:tail_thick} can be checked in $O(1)$-time, the entire algorithm runs in $O(n)$ time.

\begin{theorem}
Let $G$ be a $P_4$-sparse graph on $n$ vertices and let $uw$ be tail attached at node~$u$ of $G$. If the $P_4$-sparse tree of $G$ is given, Algorithm $P_4$-sparse-Tail-Addition computes the minimum number of edges to be added to $G+uw$ so that the resulting graph is $P_4$-sparse in $O(n)$ time.
\end{theorem}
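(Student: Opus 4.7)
The plan is to prove correctness and the $O(n)$ time bound separately, leveraging the structural results already established. Overall, the algorithm is a driver that enumerates a small, provably exhaustive family of candidate completions and returns the cheapest; my task is to verify exhaustiveness, correctness of each candidate's cost, and $O(1)$-amortized evaluation per candidate.

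For correctness, the argument is driven by Lemma~\ref{lemma:forms}: every minimum $P_4$-sparse completion $G'$ of $G+uw$ has a tree $T_{G'}$ of one of three forms---the new tail may live inside a thin spider with $u\in K$, $w\in S$; the leaf for $u$ may be replaced by Formation~$1$; or Formation~$2$ may be applied at some $1$- or $2$-node $t$ on the root-to-$u$ path---together with the exceptional situation of Lemma~\ref{lemma:p4_formation} in which $uw$ is a wing of a $P_4$. I would match each block of Algorithm $P_4$-sparse-Tail-Addition to one of these cases and verify its cost: the initialization $min\gets |N_G(u)|$ accounts for Formation~$1$, whose cost is exactly the number of neighbors of $u$ that must be joined to $w$; the first \textbf{for} loop sweeps over Formation~$2$ placements and produces the count asserted in Case~(ii) of Corollary~\ref{corol:special}; the second \textbf{for} loop enumerates the wing-of-a-$P_4$ configurations of Lemma~\ref{lemma:p4_formation}, checking for a universal vertex $a\in V_i$ under a $1$-node $t_i$ followed by an isolated vertex $b\in V_{i+1}$ under a $0$-node $t_{i+1}$; and the final \textbf{if} treats the case $t_h$ is a $2$-node by combining Lemmas~\ref{lemma:tail_thin} and~\ref{lemma:tail_thick} with Case~(i) of Corollary~\ref{corol:special}. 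Because these results collectively certify that the optimum occurs among the enumerated candidates and the algorithm retains the minimum across them, the returned value is optimal.

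For the running time, assuming $T_G$ is given I would perform a single post-order traversal that (i)~identifies the root-to-$u$ path $t_0\cdots t_h u$ and the label of each $t_i$, (ii)~records for each $i$ the values $|V_i|$, $|V_i\cap N_G(u)|$, and $|V_i\setminus N_G(u)|$, and (iii)~marks whether $V_i$ contains a vertex universal in $G[V_i]$ and whether $V_i$ contains a vertex isolated in $G[V_i]$. Prefix sums of $|V_j\cap N_G(u)|$ and $|V_j\setminus N_G(u)|$ along the path are assembled in one pass, letting every iteration of the two \textbf{for} loops evaluate its candidate cost in $O(1)$. The spider-specific formulas of Lemmas~\ref{lemma:tail_thin} and~\ref{lemma:tail_thick} depend only on $|S_H|$, $|K_H|$, $|R_H|$ and the part of the spider containing $u$, all accessible in $O(1)$ after the traversal. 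Since $h=O(n)$ and $|V(T_G)|=O(n)$, the total work is $O(n)$.

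The step I expect to be the main obstacle is the bookkeeping for the wing-of-a-$P_4$ case: one must detect, in amortized $O(1)$ per candidate, whether a $1$-node $t_i$ sitting above a $0$-node $t_{i+1}$ simultaneously supplies a universal $a\in V_i$ and an isolated $b\in V_{i+1}$, and then assemble the cost $|N_G(u)\cap(V_0\cup\cdots\cup V_{i-1})| + |V_i\setminus\{a\}| + |V_{i+1}\setminus\{b\}| + |(V_{i+2}\cup\cdots\cup V_h)\setminus N_G(u)|$ without re-scanning neighborhoods. The resolution is standard: the existence of a suitable $a$ (resp.\ $b$) reduces to a local property of the children subtrees of $t_i$ (resp.\ $t_{i+1}$) that the traversal records once, after which the prefix sums above yield every such cost in $O(1)$.
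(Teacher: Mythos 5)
Your proposal is correct and follows essentially the same route as the paper: correctness by matching each block of the algorithm to the cases certified by Lemma~\ref{lemma:forms}, Corollary~\ref{corol:special}, and Lemmas~\ref{lemma:tail_thin}, \ref{lemma:tail_thick}, \ref{lemma:p4_formation}, and the $O(n)$ bound via a single traversal of $T_G$ that extracts the root-to-$u$ path, the per-$V_i$ neighbor counts, and prefix sums so that each candidate is evaluated in $O(1)$. The paper's own argument is just a terser version of this; your extra bookkeeping detail (detecting the universal/isolated vertices locally during the traversal) is a point the paper leaves implicit but is handled correctly in your write-up.
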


If the $P_4$-sparse tree~$T_G$ of $G$ is not given, then it can be computed in $O(n+m)$ time where $m$ is the number of edges of $G$ \cite{JO92a}, and the entire algorithm takes $O(n+m)$ time.


\newpage

\end{document}